

\documentclass[sigplan,screen]{acmart}\settopmatter{printfolios=false,printacmref=false}
\def\fullversion



\copyrightyear{2023}
\acmYear{2023}
\setcopyright{rightsretained}
\acmConference[PPoPP '23]{The 28th ACM SIGPLAN Annual Symposium on Principles and Practice of Parallel Programming}{February 25-March 1, 2023}{Montreal, QC, Canada}
\acmBooktitle{The 28th ACM SIGPLAN Annual Symposium on Principles and Practice of Parallel Programming (PPoPP '23), February 25-March 1, 2023, Montreal, QC, Canada}
\acmDOI{10.1145/3572848.3577483}
\acmISBN{979-8-4007-0015-6/23/02}



\bibliographystyle{ACM-Reference-Format}
\citestyle{acmnumeric}     




\usepackage{caption}
\usepackage{colortbl}
\usepackage{style}
\usepackage{bigstrut}
\usepackage{microtype}
\usepackage{tabularx}
\usepackage{multirow}
\usepackage{multicol}
\usepackage{rotating}
\usepackage{lipsum}
\usepackage{balance}
\usepackage{mfirstuc}
\usepackage{titlecaps}
\usepackage{listings}
\usepackage{float}
\usepackage[rightcaption]{sidecap}
\usepackage[normalem]{ulem}
\usepackage{wrapfig}





\newcommand{\iffullversion}[1]{{{\unless\ifx\fullversion\undefined #1 \fi}}}
\newcommand{\ifconference}[1]{{{\unless\ifx\conference\undefined #1 \fi}}}




\newcommand{\mathfunc}[1]{\mathit{#1}}



\newcommand{\knn}{$k$-NN\xspace}

\newcommand{\implementationname}[1]{\textsf{#1}}
\newcommand{\connectit}{\implementationname{ConnectIt}\xspace}

\newcommand{\hipc}{\implementationname{SM'14}\xspace}
\newcommand{\gbbs}{\implementationname{GBBS}\xspace}
\newcommand{\seq}{\implementationname{SEQ}\xspace}
\newcommand{\Ouralgo}{\textsf{FAST-BCC}\xspace}
\newcommand{\ouralgo}{\textsf{FAST-BCC}\xspace}

\newcommand{\first}{\mathfunc{first}\xspace}
\newcommand{\last}{\mathfunc{last}\xspace}
\newcommand{\low}{\mathfunc{low}\xspace}
\newcommand{\high}{\mathfunc{high}\xspace}

\newcommand{\BCC}{BCC}

\newcommand{\resgraph}{skeleton\xspace}
\newcommand{\bcchead}{BCC head\xspace}
\newcommand{\fence}{fence}
\newcommand{\Fence}{Fence}
\newcommand{\aux}{auxiliary}

\newcommand{\plain}{plain}
\newcommand{\sketch}{skeleton}
\newcommand{\Sketch}{Skeleton}
\newcommand{\insketch}{\mf{InSkeleton}}
\newcommand{\sketches}{skeletons}
\newcommand{\sketchconnect}{skeleton-connectivity}
\newcommand{\connect}{connectivity}
\newcommand{\treepath}{\,\textasciitilde\,}

\newcommand{\stepfunc}[1]{\textit{#1}}
\newcommand{\firstcc}{\stepfunc{First-CC}}
\newcommand{\rootst}{\stepfunc{Rooting}}
\newcommand{\lowhigh}{\stepfunc{Tagging}}
\newcommand{\lastcc}{\stepfunc{Last-CC}}




\newcommand{\modelop}[1]{\texttt{#1}}
\newcommand{\forkins}{\modelop{fork}}

\newcommand{\thread}{thread}

\newcommand{\cas}{{\texttt{compare\_and\_swap}}}
\newcommand{\CAS}{{\texttt{CAS}}}



\mathchardef\sdash="2D

\newcommand{\true}{\emph{true}}
\newcommand{\false}{\emph{false}}

\setlength{\floatsep}{2mm}
\setlength{\textfloatsep}{1mm}
\setlength{\intextsep}{2mm}
\setlength{\abovecaptionskip}{1mm}
\setlength{\belowcaptionskip}{1mm}

%
%

\newtheoremstyle{exampstyle}
{.05in} 
{.05in} 
{} 
{.5em} 
{\sc \bfseries} 
{.} 
{.5em} 
{} 
\theoremstyle{exampstyle} 
\theoremstyle{exampstyle} 


\makeatletter
\newenvironment{aproof}[1][\proofname]{\par
\pushQED{\qed}%
\normalfont
\topsep0pt \partopsep0pt 
\trivlist
\item[\hskip\labelsep
      \itshape
  #1\@addpunct{.}]\ignorespaces
}{%
\popQED\endtrivlist\@endpefalse
\addvspace{3pt plus 3pt} 
}

 \crefname{section}{Sec.}{Sec.}
 \crefname{theorem}{Thm.}{Thm.}
 \crefname{lemma}{Lem.}{Lem.}
 \crefname{corollary}{Col.}{Col.}
 \crefname{table}{Tab.}{Tab.}
 \crefname{algorithm}{Alg.}{Alg.}
 \crefname{figure}{Fig.}{Fig.}
 \crefname{fact}{Fact}{Fact}
\Crefname{table}{Tab.}{Tab.}

\begin{document}

\fancyhead{}

\title{Provably Fast and Space-Efficient Parallel Biconnectivity}
\settopmatter{authorsperrow=4}
  \author{Xiaojun Dong}
  \affiliation{\institution{UC Riverside}\city{}\country{}}
  \email{xdong038@ucr.edu}
  \author{Letong Wang}
  \affiliation{\institution{UC Riverside}\city{}\country{}}
  \email{lwang323@ucr.edu}
  \author{Yan Gu}
  \affiliation{\institution{UC Riverside}\city{}\country{}}
  \email{ygu@cs.ucr.edu}
  \author{Yihan Sun}
  \affiliation{\institution{UC Riverside}\city{}\country{}}
  \email{yihans@cs.ucr.edu}




\renewcommand\footnotetextcopyrightpermission[1]{} 

\begin{abstract}
Computing biconnected components (BCC) of a graph is a fundamental graph problem.
The canonical parallel BCC algorithm is the Tarjan-Vishkin algorithm, which
has $O(n+m)$ optimal work and polylogarithmic span on a graph with $n$ vertices and $m$ edges.
However, Tarjan-Vishkin is not widely used in practice.
We believe the reason is the space-inefficiency (it uses $O(m)$ extra space). 
In practice, existing parallel implementations are based on breath-first search (BFS).
Since BFS has span proportional to the diameter of the graph,
existing parallel BCC implementations suffer from poor performance on large-diameter graphs
and can be slower than the sequential algorithm on many real-world graphs.

We propose the first parallel biconnectivity algorithm (\ouralgo{}) that has optimal work, polylogarithmic span, and is space-efficient.
Our algorithm creates a \sketch{} graph based on any spanning tree of the input graph. 
Then we use the connectivity information of the \sketch{} to compute the biconnectivity of the original input.
We carefully analyze the correctness of our algorithm, which is highly non-trivial.

We implemented \ouralgo{} and compared it with existing implementations,
including \gbbs{}, Slota and Madduri's algorithm, and the sequential Hopcroft-Tarjan algorithm.
We tested them on a 96-core machine on 27 graphs with varying edge distributions.
\ouralgo{} is the fastest on \emph{all} graphs.
On average (geometric means), \ouralgo{} is 
3.1$\times$ faster than the \emph{best existing baseline} on each graph.
\end{abstract}



\begin{CCSXML}
<ccs2012>
    <concept>
        <concept_id>10003752.10003809.10010170.10010171</concept_id>
        <concept_desc>Theory of computation~Shared memory algorithms</concept_desc>
        <concept_significance>500</concept_significance>
        </concept>
    <concept>
        <concept_id>10003752.10003809.10003635</concept_id>
        <concept_desc>Theory of computation~Graph algorithms analysis</concept_desc>
        <concept_significance>500</concept_significance>
        </concept>
    <concept>
        <concept_id>10003752.10003809.10010170</concept_id>
        <concept_desc>Theory of computation~Parallel algorithms</concept_desc>
        <concept_significance>500</concept_significance>
        </concept>
  </ccs2012>
\end{CCSXML}

\ccsdesc[500]{Theory of computation~Shared memory algorithms}
\ccsdesc[500]{Theory of computation~Graph algorithms analysis}
\ccsdesc[500]{Theory of computation~Parallel algorithms}

\keywords{Parallel Algorithms, Graph Algorithms, Biconnectivity, Connectivity, Graph Analytics}


\hide{
\textbf{
{\LARGE This is the full version of the submission }}

\vspace{.5in}

\begin{center}
\Large
\textbf{\emph{Provably Fast and Space-Efficient Parallel Biconnectivity}}
\end{center}

\vspace{.5in}

{\LARGE
All the supplemental materials
are presented \textbf{in the appendix} (the main body is the same as our submission).
We keep the main body and references because our supplemental materials use hyperlinks and citations to the previous parts of the paper.}
}

\def\@titlefont{\huge\sffamily\bfseries} 

\maketitle

\section{Introduction}\label{sec:intro}
Graph biconnectivity is one of the most fundamental graph problems.
Given an \emph{undirected} graph $G=(V,E)$ with $n=|V|$ vertices and $m=|E|$ edges,
a \defn{connected component (CC)} is a maximal subset in $V$
such that every two vertices in it are connected by a path.
A \defn{biconnected component (BCC)} (or blocks) is a maximal subset $C\subseteq V$
such that $C$ is connected and remains connected after removing any vertex $v\in C$.
In this paper, we use BCC (or CC)
for both the biconnected (or connected) component in the graph
and the problem of computing all BCCs (or CCs).
BCC has extensive applications such as planarity testing~\cite{boyer2006simplified,hopcroft1974efficient,bachmaier2005radial}, centrality computation~\cite{sariyuce2013incremental,jamour2017parallel,sariyuce2013shattering}, and network analysis~\cite{newman2008bicomponents,ausiello2011real}.

Sequentially, the Hopcroft-Tarjan algorithm~\cite{hopcroft1973algorithm} 
for BCC uses $O(n+m)$ work.
However, this algorithm requires generating a spanning tree of~$G$ based on the depth-first search (DFS), which is considered hard to be parallelized~\cite{reif1985depth}.
Later, Tarjan and Vishkin proposed the canonical parallel BCC algorithm~\cite{tarjan1985efficient}.
It uses an \defn{arbitrary spanning tree (AST)} (a spanning tree with any possible shape) of the graph instead of the depth-first tree.
Tarjan-Vishkin algorithm has $O(n+m)$ optimal work (number of operations) and polylogarithmic span (longest dependent operations), assuming an efficient parallel CC algorithm.

Although the Tarjan-Vishkin algorithm is theoretically considered ``optimal'' in work and span, 
significant challenges still remain in achieving a high-performance implementation in practice. 
The main issue in Tarjan-Vishkin is space-inefficiency.
Tarjan-Vishkin generates an auxiliary graph $G'=(V',E')$ (which we refer to as the \defn{\sketch{}}), where every edge $e \in E$ maps to a vertex in $V'$.
Tarjan and Vishkin showed that computing CC on $G'$ gives the BCC on $G$, and we refer to this step as the \defn{connectivity} phase.
This \defn{\sketchconnect{}} framework is adopted in many later papers.
Such algorithms first generate a \sketch as an auxiliary graph $G'$ from $G$, and then finds the CCs on $G'$ that reflect BCC information on the input graph $G$.
%
Unfortunately, in Tarjan-Vishkin, generating the \sketch{} $G'$ and computing CC on $G'$ take $O(m)$ extra space, which greatly increases the memory usage and slows down the performance. 

\begin{figure}
  \centering
  \vspace{.5em}
  \includegraphics[width=\columnwidth]{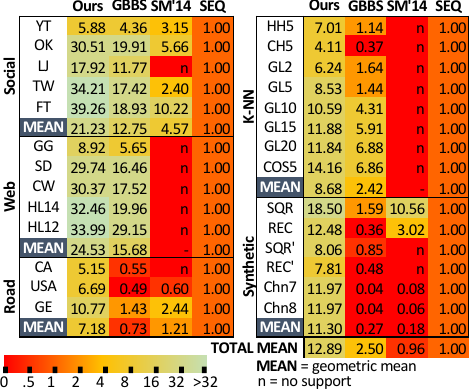}
  \caption{\small \textbf{The heatmap of relative speedup for parallel BCC algorithms over the sequential Hopcroft-Tarjan algorithm~\cite{hopcroft1973algorithm} using 96 cores (192 hyper-threads).}
Larger/green means better. The numbers indicate how many times a parallel algorithm is faster than sequential Hopcroft-Tarjan ($<1$ means slower). The two baseline algorithms are from~\cite{slota2014simple,gbbs2021}. 
Complete results are in~\cref{tab:bcc}.}\label{fig:heatmap}
\Description[<short description>]{<long description>}
\vspace{-.2em}
\end{figure}
%
In practice, most existing parallel BCC implementations also follow the \defn{\sketchconnect{}} framework but
overcome the space issue by using other \sketches{} based on breadth-first search (BFS) trees~\cite{cong2005experimental,slota2014simple,gbbs2021,chaitanya2015simple,chaitanya2016efficient,wadwekar2017fast,feng2018distributed}.
These algorithms either use \sketches{} with $O(n)$ size~\cite{cong2005experimental,chaitanya2015simple,chaitanya2016efficient,wadwekar2017fast,feng2018distributed}
or maintain implicit \sketches{} with $O(n)$ auxiliary space~\cite{slota2014simple,gbbs2021}.
We say a BCC algorithm is \defn{space-efficient} if it uses $O(n)$ auxiliary space (other than the input graph).
However, since computing BFS has span proportional to the graph,
these BFS-based algorithms can be fast on small-diameter graphs (e.g., social and web graphs),
but have poor performance on large-diameter graphs (e.g., $k$-nn and road graphs).
In our experiments, we observe that existing parallel implementations can even be slower than sequential Hopcroft-Tarjan on many real-world graphs (see \gbbs{}~\cite{gbbs2021} and \hipc{}~\cite{slota2014simple} in \cref{fig:heatmap}).

\defn{In this paper, we give the first space-efficient ($\,O(n)$ auxiliary space) parallel biconnectivity algorithm that has efficient $O(m+n)$ work and polylogarithmic span}.
Our \sketch{} $G'$ is based on an arbitrary spanning tree (AST).
Unlike Tarjan-Vishkin, our $G'$ is a subgraph of $G$ and can be maintained \emph{implicitly} in $O(n)$ auxiliary space.
The key idea is to carefully identify some \defn{\fence} edges, which indicate the ``boundaries'' of the BCCs.
At a high level, we categorize all graph edges into \fence{} tree edges, plain (non-\fence{}) tree edges, back edges, and cross edges.
Our \sketch{} $G'$ contains the plain tree edges and cross edges.
Using $O(n)$ space, we can efficiently determine the category of each edge in $G$.
When processing the \sketch{}, we use the input graph $G$ but skip the \fence{} and back edges.
We show that the BCC information of $G$ can be constructed from the CC information of $G'$ plus some simple postprocessing.
Since our algorithm is based on \textbf{Fencing an Arbitrary Spanning Tree},
we call our algorithm \textbf{\ouralgo{}}.
More details of \ouralgo{} are in \cref{fig:bcc}.
We note that conceptually our algorithm is simple, but the correctness analysis is highly non-trivial.

We implement our theoretically-efficient \ouralgo{} algorithm
and compare it to the state-of-the-art parallel BFS-based BCC implementations \gbbs{}~\cite{gbbs2021} and \hipc~\cite{slota2014simple}.
We also compare \ouralgo{} to the sequential Hopcroft-Tarjan algorithm.
We test 27 graphs, including social, web, road, \knn{}, and synthetic graphs, with significantly varying sizes
and edge distributions. The details of the graphs and results are given in \cref{tab:bcc}.
We also show the relative running time in \cref{fig:heatmap}, normalized to the sequential Hopcroft-Tarjan.

On a machine with 96 cores, \ouralgo{} is the fastest on \defn{all} tested graphs.
We use the geometric means to compare the ``average'' performance across multiple graphs.
Due to work- and space-efficiency, our algorithm running on one core is competitive with Hopcroft-Tarjan (2.8$\times$ slower on average).
Polylogarithmic span leads to good parallelism for \defn{all types of graphs} (15--66$\times$ self-relative speedup on average).
On small-diameter graphs (social and web graphs), although \gbbs{} and \hipc{} also achieve good parallelism,
\ouralgo{} is still 1.2--2.1$\times$ faster than the best of the two,
and is 5.9--39$\times$ faster than sequential Hopcroft-Tarjan.
For large-diameter graphs (road, $k$-nn, grid, and chain graphs), existing BFS-based implementations can perform worse than Hopcroft-Tarjan.
Due to low span,
\ouralgo{} is 1.7--295$\times$ faster than \gbbs{} (10$\times$ on average), and 4.1--18.5$\times$ faster than sequential Hopcroft-Tarjan (9.2$\times$ on average).
On all graphs, \ouralgo{} is 3.1$\times$ faster on average than the best of the three existing implementations.
Our code is publicly available~\cite{bcccode}. We present more results and analyses in the full version of this paper~\cite{dong2023provablyfull}.


\hide{
We summarize our contribution as follows:
\begin{enumerate}
    \item The design and analysis of a parallel biconnectivity algorithm that is theoretically-efficient in work, span, and space.
    \item A practical implementation on the parallel biconnectivity problem, which achieves the best performance on a wide range of real-world and synthetic graphs. This is the first publicly available theoretically-efficient implementation to our best knowledge.
    \item In-depth analysis on existing parallel biconnectivity implementations.
\end{enumerate}
}

\section{Preliminaries}\label{sec:prelim}

\myparagraph{Computational Model}.
We use the work-span (or work-depth) model for fork-join parallelism
with binary forking to analyze parallel algorithms~\cite{CLRS,blelloch2020optimal},
which is recently used in many papers on parallel algorithms~\cite{agrawal2014batching,blelloch2010low,BCGRCK08,BG04,Blelloch1998,blelloch1999pipelining,BlellochFiGi11,BST12,BBFGGMS16,dinh2016extending,xu2022efficient,dong2021efficient,blelloch2018geometry,dhulipala2020semi,BBFGGMS18,blelloch2020randomized,gu2021parallel,gu2022parallel,shen2022many}.
We assume a set of \thread{}s that share a common memory.
A process can \forkins{} two child software \thread{s} to work in parallel.
When both children complete, the parent process continues.
The \defn{work} of an algorithm is the total number of instructions and
the \defn{span} (depth) is the length of the longest sequence of dependent instructions in the computation.
We say an algorithm is \defn{work-efficient} if its work is asymptotically the same as the best sequential algorithm.
We can execute the computation using a randomized work-stealing scheduler~\cite{BL98,ABP01} in practice.
We assume unit-cost atomic operation \cas{}$(p,v_{\mathit{old}},v_{\mathit{new}})$ (or \CAS{}),
which atomically reads the memory location pointed to by $p$, and write value $v_{\mathit{new}}$ to it if the current value is $v_{\mathit{old}}$.
It returns $\true{}$ if successful and $\false{}$ otherwise.

\myparagraph{Notation.}
Given an undirected graph $G=(V,E)$,
we use $n=|V|$, $m=|E|$.
Let $\textsf{diam}(G)$ be the diameter of $G$, and $x$--$y$ be an edge between $x$ and $y$.
\defn{CC} and \defn{BCC} are defined in \cref{sec:intro}.
An \defn{articulation point} (or cut vertex) is a vertex s.t.\ removing it increases the number of CCs.
A \defn{bridge} (or cut edge) is an edge s.t.\ removing it increases the number of CCs.
A spanning tree $T$ of a connected graph $G$ is a spanning subgraph of $G$
that contains no cycles.
The spanning forest is defined similarly if $G$ is disconnected.
For simplicity, we assume $G$ is connected,
but our algorithm and implementation work on any graph.
Given a graph $G$ and a rooted spanning tree {$T$},
an edge is a \defn{tree edge} if it is in $T$.
A non-tree edge is a \defn{back edge} if one endpoint is the ancestor of the other endpoint,
and a \defn{cross edge} otherwise.
\cref{fig:bcc} Step 3 shows an illustration.
If $T$ is a BFS tree, there are no back edges; if $T$ is a DFS tree, there are no cross edges.
We use $x\,$\textasciitilde$\,y$ to denote the tree path between $x$ and $y$ on $T$.
We denote the parent of vertex $u$ as $p(u)$, and the subtree of $u$ as $T_u$.
The notation used in this paper is given in \cref{tab:notation}.

We use $O(f(n))$ \emph{with high probability} (\whp{}) in $n$ to mean $O(cf(n))$ with probability at least $1-n^{-c}$ for $c \geq 1$.

\myparagraph{Euler tour technique (ETT).}
ETT is proposed by Tarjan and Vishkin~\cite{tarjan1985efficient} in their BCC algorithm to root a spanning tree.
Later, ETT becomes a widely-used primitive in both sequential and parallel settings, including computational geometry~\cite{aggarwal1988parallel}, graph algorithms~\cite{chiang1995external,vishkin1985efficient,arge2002cache}, maintaining subtree or tree path sums~\cite{CLRS}, and many others.
ETT is needed in Tarjan-Vishkin because when an arbitrary spanning tree is generated for a graph (e.g., from a CC algorithm), it is not
rooted, and thus we do not have the parent-child information for the vertices.
Given an unrooted tree $T$ with $n-1$ edges, ETT finds an Euler tour of $T$, which is a cycle traversing each edge in $T$ exactly twice (once in each direction).
ETT first constructs a linked list on the $2n-2$ directed tree edges, and runs list ranking on it.
We refer the audience to the textbooks on parallel algorithms~\cite{JaJa92,Reif93} for more details on ETT.
Using the semisort algorithm from~\cite{gu2015top,blelloch2020optimal} and list ranking from~\cite{blelloch2020optimal}, ETT costs $O(n)$ expected work and $O(\log n)$ span \whp{}.
Given $T$, we can set any vertex as the root of $T$, and use ETT to determine the directions of the edges.
We can then determine the parent of any vertex, and whether an edge is a tree edge, back edge, or cross edge in $O(1)$ work.


\begin{table}
\centering \small
\begin{tabular}{@{}c@{}@{}l@{ }@{ }c@{}l@{}}
  \hline
  $G=(V,E)$ & : Input Graph &
  $T=(V,E_T)$ & : A spanning tree in $G$
  \\
  \multicolumn{3}{c}{$a,b,c,u,v,h,w,x,y,z,u',v',c'\dots \in V$} & : Vertices in $G$\\\hline\hline
  $x$--$y\in E$   & : An edge in $G$ &
  $C,C_i$ & : A BCC in $G$ \\
  $T_u$  & : $u$'s subtree in $T$ &
  $h_C$ & : The BCC head of $C$ \\
  $p(u)$ & : $u$'s parent in $T$ &
   $x\,$\treepath$\,y$   & : A tree path in $T$\\
   $P=x$--$y$--$\cdots$&: A path & $G'$ & : The {\sketch{}}\\
   \hline
   \hline
  \textbf{Fence edge}&\multicolumn{3}{@{}l@{}}{: $(p(v),v)\in E_T$, $\nexists$ $(x,y)\in E$, s.t. $x\in T_v$ and $y\notin T_{p(v)}$}\\
  &\multicolumn{3}{@{}l@{}}{\hspace{-.75em}(no edge from $v$'s subtree escapes from $p(v)$'s subtree)}\\
  \textbf{Plain edge}&\multicolumn{3}{@{}l@{}}{: $(p(v),v)\in E_T$, $(p(v),v)$ is not a fence edge}\\
  \multicolumn{4}{@{ }l}{\textbf{Back edge, Cross edge}~: Edges in $E\setminus E_T$, defined as usual}\\
  \multicolumn{4}{@{ }l@{}}{\textbf{\Sketch{}} $G'=(V,E')$ in \textbf{\ouralgo{}}\,: $E'=\{$plain$\,\&\,$cross edges$\}$}\\
  \hline
\end{tabular}
\vspace{-0.8em}
\caption{\bf Notations and terminologies in this paper.} 
\vspace{-.5em}
\label{tab:notation}
\end{table}

\hide{
\myparagraph{Parallel Primitives.}
\defn{Reduce} takes as input a sequence [$a_1, a_2, \cdots, a_n$], a binary associative operator $\oplus$, and outputs $\oplus_{i=1}^n a_i$
(e.g., if $\oplus$ is $+$, it outputs the total sum of the sequence plus $t$).
\defn{Scan} takes the same input as reduce, but rather than only return the total sum,
it returns a sequence that consists of the running sum
$[a_1, a_1 \oplus a_2, \cdots, \oplus_{i=1}^n a_i]$.
\defn{Pack} takes an array of elements and a binary array of the same size,
and returns an array containing elements for which its corresponding value in the binary array is true.
\defn{Semisort} takes as input an array of records with associated keys, and return a reordered array
such that records with identical keys are contiguous.
Note that the problem does not require all keys appear in sorted order in the output, nor the records with the same key sorted.
\defn{List ranking} takes a linked list with associated value on each node, and returns the rank of each node in the linked list.
One can flatten the linked list to a contiguous array such that it can be random access.
The \defn{Euler tour} of a spanning tree $T$ is a directed circuit such that
it traverses each edge on both directions exactly once.

Given a spanning tree $T$, the \defn{Euler tour} of $T$ is a directed circuit such that it traverses each edge exactly once.
This technique allows the parallel computations on various kinds of information of a tree in linear time and polylogarithmic span.
Most computations can be represented by two operations, rootfix and leaffix.
The \defn{rootfix} of a vertex $v$ is the aggregated value on the subtree of $v$ (e.g., the subtree size of $v$).
The \defn{leaffix} of a vertex $v$ is the aggregated value on all vertices from $v$ to the root of the tree (e.g., the depth of $v$).
Specifically, in the Tarjan-Vishkin algorithm, this technique helps to compute the preorder number, postorder number, and the number of descendants of each vertex.

} 

\setlength{\abovedisplayskip}{-.1in}
\setlength{\belowdisplayskip}{.05in}
\setlength{\abovedisplayshortskip}{-.1in}
\setlength{\belowdisplayshortskip}{-.1in}

\section{Existing BCC Algorithms} \label{sec:previous-bcc}

This section reviews the existing BCC algorithms and implementations.
We will use the \emph{\sketchconnect{} framework} to describe the existing BCC algorithms.
The \emph{skeleton phase} generates a \sketch{} $G'$ from $G$, which is an auxiliary graph. 
Then the \emph{connectivity phase} computes the connectivity on $G'$ to construct the BCCs of $G$.
Existing BCC algorithms can be categorized by how the \sketch{} $G'$ is generated.
The Hopcroft-Tarjan algorithm uses DFS-based \sketches{}; the Tarjan-Vishkin Algorithm generates a \sketch{} based on an arbitrary spanning tree (AST); almost all other BCC algorithms (see \cref{sec:related-other}) use BFS-based \sketches{}.


\subsection{The Hopcroft-Tarjan Algorithm}\label{sec:hopcroft-tarjan}
Sequentially, Hopcroft-Tarjan BCC algorithm~\cite{hopcroft1973algorithm} has $O(n+m)$ work using a depth-first search (DFS) tree $T$.
Based on $T$, two \defn{tags} $\first[\cdot]$ and $\low[\cdot]$ are assigned to each vertex.
$\first[v]$ is the preorder number of each vertex in $T$. 
$\low[v]$ gives the earliest (smallest preorder) vertex incident on any vertex $u\in T_v$ via a non-tree edge and $u$ itself.
More formally,

\begin{align*}
\low[v] &= \min\{w_1[u]~|~u\in V \text{ is in the subtree rooted at }v\}\\[-.05in]
w_1[u]&=\min\{\{\first[u]\} \cup \{\first [u']~|~(u, u')\notin T\}\}
\end{align*}

To compute the BCCs, an additional stack is maintained.
Each time we visit a new edge, it is pushed into the stack.
When an articulation point $p(u)$ is found by $u$ ($\low[u] \geq \first[p(u)]$), edges are popped from the stack until $u$--$p(u)$ is popped.
These edges and the relevant vertices form a BCC.

Conceptually, the \sketch{} in Hopcroft-Tarjan is the DFS tree without the ``fence edges'' of $u$--$p(u)$ when $\low[u] \geq \first[p(u)]$.
This insight also inspires our BCC algorithm.

\subsection{The Tarjan-Vishkin Algorithm}\label{sec:tarjan-vishkin}
Hopcroft-Tarjan uses a DFS tree as the \sketch, but DFS is inherently serial and hard to be parallelized~\cite{reif1985depth}.
To parallelize BCC, the Tarjan-Vishkin algorithm~\cite{tarjan1985efficient} uses an arbitrary spanning tree (AST) instead of a DFS tree.
This spanning tree $T$ can be obtained by any parallel CC algorithm.
The algorithm then uses ETT (which was also proposed in that paper) to root the tree $T$ (see \cref{sec:prelim}).
Then the algorithm builds a \sketch{} $G'=(E,E')$ and runs a connectivity algorithm on it.
\ifconference{We describe $G'$ in more details in the full paper~\cite{dong2023provablyfull},}\iffullversion{
We describe $G'$ in more details in \cref{app:tv},}
and only briefly review it here.
The vertices in $G'$ correspond to the edges in $G$\footnote{In a later paper~\cite{edwards2012better}, it was shown that the number of vertices in $G'$ can be reduced to $O(n)$, but $|E'|$ is still $O(m)$.}.
To determine the edges in $G'$, the algorithm uses four tags ($\first[\cdot]$, $\last[\cdot]$, $\low[\cdot]$, and $\high[\cdot]$) for each vertex.
Here $\first[u]$ and $\last[u]$ are the first and last appearance of vertex $u$ in the Euler tour (note that this is not the same $\first[\cdot]$ in Hopcroft-Tarjan, but conceptually equivalent).
$\low[\cdot]$ is the same as defined in Hopcroft-Tarjan, and $\high[\cdot]$ is defined symmetrically:

\begin{align*}
\high[v] &= \max\{w_2[u]~|~u\in V \text{ is in the subtree rooted at }v\}\\[-.05in]
w_2[u]&=\max\{\{\first[u]\} \cup \{\first [u']~|~(u, u')\notin T\}\}
\end{align*}

All tags can be computed in $O(n+m)$ expected work and $O(\log n)$ span \whp{} using ETT.
Tarjan-Vishkin then finds the CCs on $G'$ to compute the BCCs of $G$.
However, $G'$ in Tarjan-Vishkin can be large, making the algorithm less practical.

Assuming an efficient ETT and a parallel CC algorithm, Tarjan-Vishkin uses $O(n+m)$ optimal expected work and polylogarithmic span.
However, the space-inefficiency hampers the practicability of Tarjan-Vishkin since $G'$ contains $O(m)$ edges.
In our experiments, Tarjan-Vishkin takes up to 11$\times$ extra space than our \ouralgo or \gbbs{}.
On our machine with 1.5TB memory,
Tarjan-Vishkin ran out of memory when processing the Clueweb graph~\cite{webgraph},
although it only takes about 300GB to store the graph \ifconference{(see discussions in the full version~\cite{dong2023provablyfull}).}\iffullversion{(see discussions in \cref{sec:exp:tv}).}
The large space usage forbids running Tarjan-Vishkin on large-scale graphs on most multicore machines.
Even for small graphs, high space usage can increase memory footprint and slow down the performance.


Some existing BCC implementations (e.g., \gbbs{}~\cite{gbbs2021} and TV-filter~\cite{cong2005experimental})
were also described as Tarjan-Vishkin algorithms,
probably because they also use the \sketchconnect{} framework.
We note that their correctness relies on BFS-based \sketches{} (i.e., sparse certificates~\cite{cheriyan1991algorithms}), and we categorized them below together with a few other algorithms.

\subsection{Other Existing Algorithms / Implementations}\label{sec:related-other}

Before Tarjan-Vishkin, Savage and J{\'a}J{\'a}~\cite{savage1981fast} showed a parallel BCC algorithm based on matrix-multiplication with $O(n^3\log n)$ work.
Tsin and Chin~\cite{tsin1984efficient} gave an algorithm that uses an AST-based skeleton.
It is quite similar to Tarjan-Vishkin, but uses $O(n^2)$ work. 
There are other parallel BCC algorithms based on ear decomposition~\cite{miller1987new,ramachandran1992parallel,savage1981fast,Reif93}.
Similar to Tarjan-Vishkin, a faithful implementation would take $O(m)$ auxiliary space. 

To achieve space-efficiency, many later parallel BCC algorithms use BFS-based \sketches{}~\cite{cong2005experimental,slota2014simple,gbbs2021,chaitanya2015simple,chaitanya2016efficient,wadwekar2017fast,feng2018distributed,ji2020aquila}.
Many of them use the similar idea of sparse certificates~\cite{cheriyan1991algorithms}.
BCC is much simpler with a BFS tree---all non-tree edges are cross edges with both endpoints in the same or adjacent levels.
Cong and Bader's TV-filter algorithm~\cite{cong2005experimental} uses the \sketch{} as the BFS tree $T$ and an arbitrary spanning tree/forest for $G\setminus T$ ($O(n)$ total size).
Slota and Madduri's algorithms~\cite{slota2014simple} and Dhulipala et al.'s algorithm~\cite{gbbs2021} use the \sketches{} as the input graph $G$ excluding $O(n)$ vertices/edges.
The other algorithms~\cite{chaitanya2015simple,chaitanya2016efficient,wadwekar2017fast,feng2018distributed} use a BFS tree as the \sketch{}, and compute connectivity dynamically.
All these algorithms are space-efficient.
Their \sketch{} graphs either have $O(n)$ size~\cite{cong2005experimental,chaitanya2015simple,chaitanya2016efficient,wadwekar2017fast,feng2018distributed} or can be implicitly represented using $O(n)$ information~\cite{slota2014simple,gbbs2021}.
However, the span to generate a BFS tree is proportional to the diameter of the graph, which is inefficient for large-diameter graphs.
\hide{
In summary, while these algorithms are different, they share three following common properties.
(1) They all use the special property for BFS trees (e.g., no back edge).
(2) Their sketches either have $O(n)$ size~\cite{cong2005experimental,chaitanya2015simple,chaitanya2016efficient,wadwekar2017fast,feng2018distributed} or can be implicitly represented using $O(n)$ information~\cite{slota2014simple,gbbs2021}.
(3) Generating the BFS tree requires $\Omega(\textsf{diam}(G))$ span, which is slow for graphs with large diameters.
}

\hide{
The Tarjan-Vishkin algorithm then reduces the the BCC problem to a CC problem on a transformed graph $G'=(V', E')$, where $V'=E\setminus T$ and $E'$ can be computed based on identifying tree edges, back edges, and cross edges in $E$.
The Tarjan-Vishkin costs $O(W_{CC}(n, m)))$ work and $O(D_{CC}(n, m))$ span,
where $W_{CC}(n, m)$ and $D_{CC}(n, m)$ are the work and span for graph connectivity.
It can be done in $O(n+m)$ work and polylogarithmic span using the linear-work parallel connectivity algorithm~\cite{SDB14} proposed later.
We refer the readers to~\cite{tarjan1985efficient} for more details about the Euler tour technique and Tarjan-Vishkin algorithm.

The Tarjan-Vishkin algorithm is theoretically-efficient.
However, we are unaware of any existing publicly-available implementation of it. We believe it is mainly due to the space-inefficiency in processing the new graph $G'$.
Although running connectivity requires space proportional to the number of vertices,
the number of vertices in $G'$ is $|E|$, which is the number of edges in $G$.
Most real-world graphs have significantly more edges than vertices (see \cref{tab:graphinfo}),
which means it can increase tremendous auxiliary space.
For instance, saving all edges in Hyperlink12 uses 842GB of memory,
while saving all vertices in the same graph only uses 13.2GB of memory.
This can greatly increase memory footprint and even make $G'$ too large to fit in memory.
This space overhead makes the Tarjan-Vishkin algorithm less suitable in the shared-memory parallel setting.

Sequentially, Hopcroft-Tarjan algorithm~\cite{hopcroft1973algorithm} finds all \BCC{}s in $O(m+n)$ work using a DFS tree.
The idea of Hopcroft-Tarjan algorithm is to identify the back edges using the label $\low[v]$ for $v\in V$:

and $\first[u]$ is the preorder of $u\in V$ based on the DFS tree.
The BCCs can be computed once $\first[u]$ and $\low[u]$ are known.

However, DFS is generally considered to be hard to parallelize~\cite{reif1985depth}.
To achieve parallelism in BCC, the Tarjan-Vishkin algorithm~\cite{tarjan1985efficient}
uses a spanning tree $T$ generated by \emph{any} connectivity algorithm.
The given spanning tree can be in arbitrary shape and not necessarily be a DFS tree.
They then root the tree $T$ using the \emph{Euler tour technique} (ETT)~\cite{tarjan1985efficient},
which can also give $T$'s preorder and postorder.
Accordingly, the Tarjan-Vishkin algorithm needs to identify and deal with both the back edges and the cross edges.
In this case, in addition to $\low[\cdot]$, it also computes $\high[\cdot]$ as:
$$\high[v] = \max\{w_2[u]~|~u\in V \text{ is in the subtree rooted at }v\}$$
where $$w_2[u]=\max\{\first[u] \cup \{\first [u']~|~(u, u')\text{ is a non-tree edge}\}\}$$
Then we can identify back edges and cross edges using $\low[\cdot]$ and $\high[\cdot]$, as well as the preorder of $T$ ($\first[\cdot]$) and the postorder of $T$ ($\last[\cdot]$),
which all can be obtained by the Euler tour.

To address the space issue, many existing parallel BCC implementations use BFS trees~\cite{gbbs2021, cong2005experimental, slota2014simple}
to avoid processing the graph $G'$. This also simplifies algorithm design since using BFS trees leads to no back edges.
However, using BFS trees sacrifices the theoretical efficiency---the span bound becomes $\Omega(\textsf{diam}(G))$,
so these algorithms can be slower than the sequential algorithm on large-diameter graphs (see \cref{tab:bcc}).
To date, we are unaware of any BCC implementations that are theoretically-efficient.

The idea of the Hopcroft-Tarjan algorithm is to find the articulation points of the graph by using $\first$ and $\low$.
If $v$ (except the root) is an ancestor of vertex $u$ and $\low[u] \geq \first[v]$, $v$ is an articulation point.
If $v$ is the root and it has at least two different children $u_1, u_2$ satisfying the above condition, $v$ is also an articulation point.

Note that the information can only identify the articulation points and is insufficient to decompose a graph into BCCs.
To determine the BCCs of the graph, we need to maintain an additional stack.
Each time we visit a new edge, it is pushed into the stack.
When an articulation point $v$ is found by $u$, edges are popped from the stack until $(u, v)$ is moved. These edges form a BCC. The vertices incident on these edges can be labeled.

The Tarjan-Vishkin algorithm reduces the BCC problem to a CC problem on a transformed graph.
This was independently discovered by Tsin and Chin~\cite{tsin1984efficient}.
The Tsin-Chin algorithm runs in $O(n^2)$ work and $O(\log^2n)$ span, which is optimal on dense graphs but not ideal on sparse graphs.
Tarjan and Vishkin presented an algorithm that requires $O(n+m)$ work and polylogarithmic span using the linear-work parallel connectivity algorithm~\cite{SDB14} proposed later (which was the bottleneck in the original paper).
The main techniques of the Tarjan-Vishkin algorithm are in two aspects.

\myparagraph{Finding blocks.}
Let $R$ be the relation on the edges of $G$ defined by $e_1Re_2$ iff. $e_1=e_2$ or $e_1$ and $e_2$ are on a simple cycle of $G$.
The subgraphs of $G$ induced by the equivalence classes of $R$ are the BCCs.
An auxiliary graph $G'$ of $G$ is defined such that the CCs on $G'$ correspond to the BCCs of $G$.
The vertices of $G'$ are the edges of $G$.
Let $T$ be any spanning tree of $G$.
We denote the edges of $T$ by $u \rightarrow v$, where $u$ is the parent of $v$, denoted by $p(v)$.

Let the vertices of $T$ be numbered from 1 to $n$ in preorder.

}

\hide{
Most biconnectivity implementations are bounded by I/O, thus increasing memory footprints can impede practical performance.
An empirical solution to overcome the space overhead problem in the Tarjan-Vishkin algorithm is to use a BFS tree instead of an arbitrary spanning tree.
BFS trees do not have back edges, which simplifies the relationships and thus the auxiliary graph.
However, this breaks the theoretical guarantees in the Tarjan-Vishkin algorithm because the span of a BFS algorithm is proportional to the diameter of the graphs.
Therefore, these implementations perform favorably on small-diameter graphs but do not work well on large-diameter graphs (See~\cref{tab:bcc} for details).

\myparagraph{The Cong-Bader Algorithm.}
Cong and Bader implemented three algorithms, \textit{TV-SMP}, \textit{TV-opt}, and \textit{TV-filter} in~\cite{cong2005experimental}.
\textit{TV-SMP} is a straightforward emulation of the Tarjan-Vishkin algorithm. \textit{TV-opt} is an optimized version of \textit{TV-SMP} by reducing the number of parallel primitives and rearranging and merging some steps.
\textit{TV-filter} generates a BFS tree (rather than any spanning tree) such that some edges can be filtered out when building the auxiliary graphs.
Their results showed that \textit{TV-filter} outperformed the other two implementations by reducing the execution time in computing $\low$ and $\high$ values, labeling, and computing CCs.

\myparagraph{The Slota-Madduri Algorithm.}
Slota and Madduri implemented \textit{BiCC-BFS} and \textit{BiCC-Coloring} in~\cite{slota2014simple}.
Both algorithms are based on BFS and do not construct an auxiliary graph.
Although they have good practical performance, none of them have theoretical guarantees.
The upper bounds of work for these two algorithms are $O(nm)$ and $O(n^2)$, respectively.

\myparagraph{The \gbbs{} Implmentation.}
The graph based benchmark suite (\gbbs{})~\cite{gbbs2021} is a library that implements over 20 parallel graph algorithms, and biconnectivity is one of them.
The \gbbs{} implementation is also based on BFS.
It adapts the BC-labeling~\cite{BBFGGMS18} idea, such that the auxiliary graph is a subgraph of the original graph,
and thus the subsequent connectivity step can use only $O(n)$ extra space.

\subsection{The BC-labeling Technique}
Ben-David et al. proposed the \defn{BC-labeling} technique in~\cite{BBFGGMS18}.
Instead of labeling all $m$ edges as in the standard BCC output, BC-labeling assigns at most $2n$ total labels to \emph{vertices}.
In BC-labeling, each vertex $v$ (except for the root of the spanning tree) has a label indicating which BCC $v$ is in.
All vertices with the same label, plus another vertex called the \defn{component head} attached to this label, form a BCC.
See \cref{fig:bcc} as an example of BC-labeling.

Ben-David et al. presented the BC-labeling technique to represent BCCs in $O(n)$ space, but they did not formalize a BCC algorithm with the technique or prove its correctness.
Moreover, a glitch is found in their description. \xiaojun{Give more details or do not mention at all.}
}

\subsection{Space-Efficient BCC Representation}
Since some vertices (articulation points) appear in multiple BCCs (see \cref{fig:bcc} as an example), we need a representation of all BCCs in a space-efficient manner ($O(n)$ space).
We use a commonly used representation~\cite{BBFGGMS18,gbbs2021,feng2018distributed} in our algorithm.
Given a spanning tree $T$,
we assign a label for each vertex except for the root of $T$, indicating which BCC this vertex is in.
For all vertices with the same label, we find another vertex called the \defn{component head} (see details in \cref{sec:bcc-details}) attached to this label. All vertices with the same label and the corresponding component head form a BCC.
An example of this representation is given in \cref{fig:bcc}.
It is easy to see that this representation uses $O(n)$ space since we have $n-1$ labels for all vertices and at most $n-1$ component heads. 

\newcommand{\nosemic}{\renewcommand{\@endalgocfline}{\relax}}
\newcommand{\dosemic}{\renewcommand{\@endalgocfline}{\algocf@endline}}
\newcommand{\popline}{\Indm\dosemic}
\newcommand{\pushline}{\Indp}
\setlength{\algomargin}{.5em}
\begin{algorithm}[t]
\SetNoFillComment
\small
\caption{The \ouralgo algorithm\label{alg:bcc}}
\KwIn{An undirected graph $G=(V,E)$}
\KwOut{The labels $l[\cdot]$ for vertices, and the component head for each BCC}
\SetKwInOut{Maintains}{Maintains}
\SetKwProg{myfunc}{Function}{}{}
\SetKwFor{parForEach}{ParallelForEach}{}{endfor}
\DontPrintSemicolon

Compute the spanning forest $F$ of $G$ \label{line:firstcc}\Comment{First CC}\\
Root all trees in $F$ using the Euler tour technique  \label{line:ett}\Comment{Rooting}\\
Compute tags (e.g., $\low$, $\high$) of each vertex based on the Euler tour\label{line:low-high} \Comment{Tagging}\\
Compute the vertex label $l[\cdot]$ using connectivity on $G$ with edges satisfying \insketch$(u, v)=\true$\label{line:lastcc}\Comment{Last CC}\\
\parForEach {$u\in V$ with $l[u]\ne l[p(u)]$\label{line:component-head0}} {
  Set the component head of $l[u]$ as $p(u)$\label{line:component-head}
}
\medskip
\myfunc{\upshape \insketch{}$(u, v)$\label{line:bcc-cond}\Comment{Decide if $u$--$v$ is in \sketch{} $G'$}} {
  \If{$(u, v)$ is a tree edge\label{line:bcc-tree}} {
    \Return{$\neg$ \mf{\Fence}($u, v$) and $\neg$ \mf{\Fence}($v, u$)\label{line:bcc-fence}}
  } \lElse {
    \Return{$\neg$ \mf{Back}($u, v$) and $\neg$ \mf{Back}($v, u$)\label{line:bcc-back}}
  }
}
\myfunc{\upshape \mf{\Fence}$(u, v)$\Comment{Decide if tree edge is fence edge}} {
  \Return{$\first[u] \leq \low[v]$ and $\last[u] \geq \high[v]$\label{line:bcc-critical}}
}

\myfunc{\upshape \mf{Back}$(u, v)$\Comment{Decide if non-tree edge is back edge}} {
  \Return{$\first[u] \leq \first[v]$ and $\last[u] \geq \first[v]$\label{line:bcc-backward}}
}
\end{algorithm} 

\section{The \Ouralgo{} Algorithm}

In this section, we present our \ouralgo{} algorithm with analysis.
Our algorithm is the first parallel BCC algorithm that is work-efficient, space-efficient, and has polylogarithmic span.
Recall that BFS-based algorithms are space-efficient, but BFS itself does not parallelize well.
Tarjan-Vishkin is based on AST and is highly parallel, but generating the \sketch{} is space-inefficient.
To achieve both high parallelism and space efficiency, we need novel algorithmic insights.

Interestingly, our key idea is to revisit the sequential DFS-based Hopcroft-Tarjan algorithm (\cref{sec:hopcroft-tarjan}).
Although DFS is inherently sequential, the insights in Hopcroft-Tarjan inspire our parallel BCC algorithm.
The (implicit) \sketch{} in Hopcroft-Tarjan is simple and the \sketch{} size is small ($O(n)$).
Unlike many later parallel BCC algorithms with the high-level ideas to combine cycles (based on \cref{lem:cycle}),
the idea in Hopcroft-Tarjan is the ``fencing'' condition as follows.
When computing the CC on the \sketch{} $G'$ (the DFS tree) and traversing the edge from $v$ to $p(v)$, the CC on $G'$ (BCC on $G$) is \emph{fenced} if $\low[v] \geq \first[p(v)]$.
This condition partitions the DFS tree $T$ into multiple CCs that correspond to BCCs in $G$.
Note that $G'$ in Hopcroft-Tarjan only contains edges from the DFS tree, because there are no cross edges in DFS trees and
all back edge information is captured by $\low[\cdot]$.

Now we try to generalize this idea to an arbitrary spanning tree (AST).
Directly using the ``fencing'' condition in Hopcroft-Tarjan does not work since we need to deal with cross edges. 
Note that a fence edge $v$--$p(v)$ in Hopcroft-Tarjan means that
\emph{vertices in $u$'s subtree do not have an edge that escapes (i.e., the other endpoint is outside) $p(u)$'s subtree}.
We define our \fence{} edges also based on this condition.
More formally, we say a tree edge $(u,v)$ where $u=p(v)$ is a \fence{} edge
if there is no edge $(x,y)\in E$ such that $x\in T_v$ and $y\notin T_u$.
Intuitively, it means $v$'s subtree $T_v$ is ``isolated'' from other parts outside $p(v)$'s subtree, and only interacts with the outside world through $p(v)$.
To get an equivalent condition for an AST, we borrow the idea from Tarjan-Vishkin and also compute four axillary arrays $\first[\cdot]$, $\last[\cdot]$, $\low[\cdot]$, and $\high[\cdot]$.
The ``fencing'' condition then becomes $\low[v] \geq \first[p(v)]$ and $\high[v] \leq \last[p(v)]$.
A non-fence tree edge is referred to as a \defn{plain edge}.
Note that the information for back edges is already captured by the $\low[\cdot]$ and $\high[\cdot]$ arrays, which will also be used to decide fence edges.
Our algorithm will ignore back edges as in Hopcroft-Tarjan, and our \sketch{}~$G'$ contains plain tree edges and cross edges.
Since the main approach in our algorithm is Fencing an Arbitrary Spanning Tree, we call our algorithm \textbf{\ouralgo{}}.
We note that the high-level idea of fencing (finding some special edges on the spanning tree) is also used in some existing work~\cite{BBFGGMS18,gbbs2021,slota2014simple}.
Our design of the \sketch{} and the fencing condition is
the first to achieve work-efficiency, polylogarithmic span, and space-efficiency for the BCC problem.

The outline of the algorithm is given in \cref{fig:bcc}, and the pseudocode is in \cref{alg:bcc}.
Although our fencing algorithm is simple, we note that formally proving the correctness (\cref{sec:correctness}) is highly non-trivial.


\subsection{Algorithmic Details}\label{sec:bcc-details}

Our \ouralgo{} algorithm has four steps: \firstcc{} (generate spanning trees), \rootst{} (root the spanning trees using ETT), \lowhigh{} (compute $\first[\cdot]$, $\last[\cdot]$, $w_1[\cdot]$, $w_2[\cdot]$, $\low[\cdot]$, $\high[\cdot]$, $p[\cdot]$), and \lastcc{} (run CC on the \sketch{} and post-processing).
In the \sketchconnect{} framework, the first three steps are the \sketch{} phase (compute the \sketch{} $G'$),
and the last step is the \connect{} phase (run CC on $G'$ to find all BCCs in $G$).

\myparagraph{\firstcc{}} (Step 1 in \cref{fig:bcc}, \cref{line:firstcc} in \cref{alg:bcc}).
This step finds all CCs in $G$ and generates a spanning forest $F$ of $G$.
For simplicity, in the following, we focus on one CC and its spanning tree $T$, which is unrooted at this moment.
If $G$ contains multiple CCs, they are simply processed in parallel.
Running CC only requires $O(n)$ \aux{} space.

\myparagraph{\rootst{}} (Step 2 in \cref{fig:bcc}, \cref{line:ett} in \cref{alg:bcc}).
We use the Euler tour technique (ETT) in \cref{sec:prelim} to root $T$, which implies the tree edge directions (\cref{fig:bcc}, Step 2).
ETT requires $O(n)$ space.

\myparagraph{\lowhigh{}} (Step 3 in \cref{fig:bcc}, \cref{line:low-high} in \cref{alg:bcc}).
This step generates the tags used in the algorithm, including $w_1[\cdot]$, $w_2[\cdot]$, $\low[\cdot]$, $\high[\cdot]$, $\first[\cdot]$, $\last[\cdot]$
(same as in Tarjan-Vishkin, see \cref{sec:previous-bcc}) and the parent array $p[\cdot]$.
$\low[\cdot]$ and $\high[\cdot]$ values are computed by looping over all edges and getting arrays $w_1$ and $w_2$, and applying $n$ 1D range-minimum queries (RMQ).
This step takes in $O(n+m)$ work and $O(\log n)$ span~\cite{blelloch2020optimal}.
These tags will help to decide the four edge types (see details below).
All the tag arrays have size $O(n)$.


\myparagraph{\lastcc{}} (Step 4 in \cref{fig:bcc}, \cref{line:lastcc}--\ref{line:component-head} in \cref{alg:bcc}).
As mentioned, our \sketch{} graph $G'$ contains \plain{} tree edges and cross edges.
To achieve space efficiency, we do not explicitly store~$G'$.
Since $G'$ is a subgraph of $G$, we can directly use $G$ but skip the \fence{} edges and back edges, which can be determined using the tags generated in Step 3 (\cref{line:bcc-cond}--\ref{line:bcc-backward}). 
Then we compute the CCs on the \sketch{} $G'$ (\cref{line:lastcc}), which assigns a label $l[v]$ to each vertex (\cref{fig:bcc}, Step 4.1).
In \cref{lem:cctobcc1},
we show that if two vertices are connected in $G'$,
they must be biconnected on the input graph $G$.
We then assign the head to each label (\cref{line:component-head,line:component-head0}) by looping over all \fence{} edges (\cref{fig:bcc}, Step 4.2).
For a \fence{} edge $u$--$p(u)$, if $u$ and $p(u)$ have different labels (\cref{line:component-head0}),
$p(u)$ (intuitively) isolates vertices below $u$
with the other parts in the graph. Thus, we assign $p(u)$ as the component head of $u$'s CC in $G'$.
We prove the correctness of this step in \cref{lem:cctobcc2,lem:head-to}.
This step also only requires $O(n)$ \aux{} space, which is needed by running CC on $G$ but skip certain edges.

\begin{figure*}[t]
  \centering
  \vspace{-1.0em}
  \includegraphics[width=2.1\columnwidth]{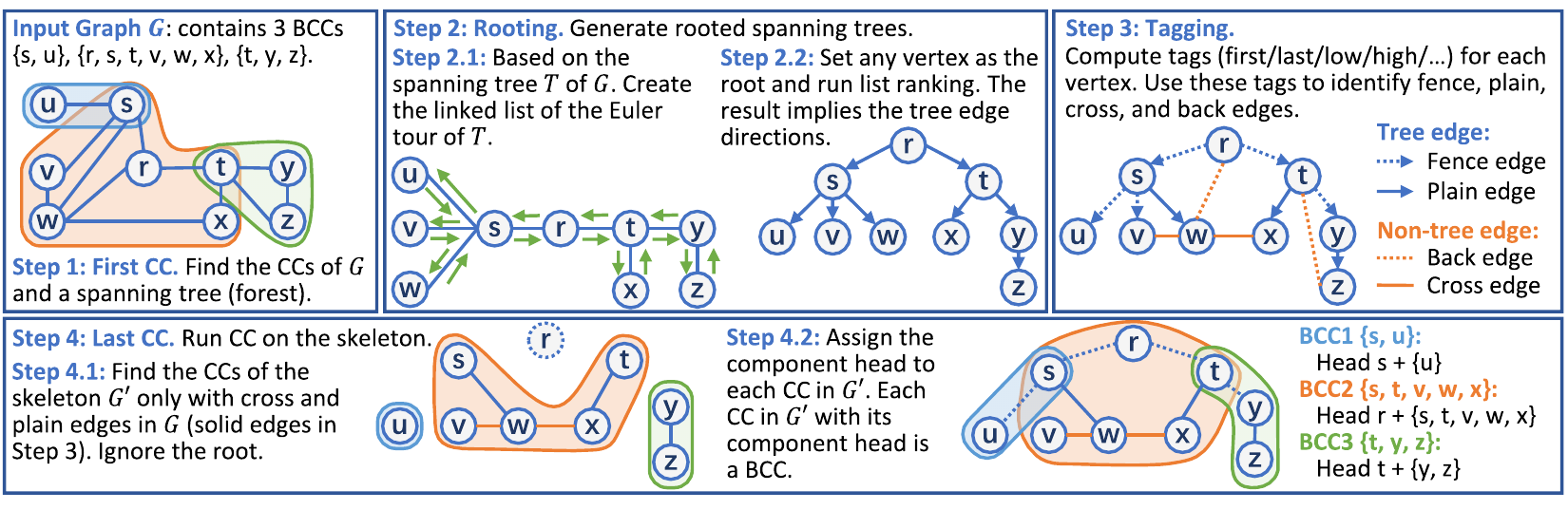}
  \vspace{-1.2em}
  \caption{\textbf{The outline of the \ouralgo algorithm and a running example.}  The four steps are explained in detail in \cref{sec:bcc-details}. \label{fig:bcc}}
  \vspace{-1.5em}
  \Description[<short description>]{<long description>}
\end{figure*}

\subsection{Correctness for the \Ouralgo{} Algorithm}\label{sec:correctness}
We now prove the correctness of our algorithm.
Note that our algorithm will identify the spanning forest in the first step and deal with each CC respectively.
For simplicity, throughout the section, we focus on one CC in $G$.

In the following, when we use the concepts about a spanning tree of the graph (e.g., {root}, {parent}, {child}, and {subtree}),
we refer to the specific spanning tree identified in Step 1 of our algorithm, and use $T$ to represent it.
Recall that ${T_u}$ denotes the subtree rooted at vertex $u$, and $u$\treepath$v$ denotes the tree path on $T$ from $u$ to $v$.
Some other notation is given in \cref{tab:notation}.
In a spanning tree, we say a node $u$ is shallower (deeper) than $v$ if $u$ is closer (farther) to the root than $v$.
We use node and vertex interchangeably.

We note that although \cref{alg:bcc} is simple, the correctness proof is sophisticated.
We show the relationship of facts, lemmas, and theorems in \cref{fig:bcc}. \ifconference{Due to the space limit, the proofs for \cref{lem:common,lem:cycle,lem:bcc-connected,lem:art-head,lem:function} are given in the full version of the paper, and we mainly focus on the proofs that reflect some key ideas in our new algorithm.}\iffullversion{
The proofs for \cref{lem:common,lem:cycle,lem:bcc-connected,lem:art-head,lem:function} are given in \cref{sec:add-proofs}, and here we mainly focus on the proofs that reflect some key ideas in our new algorithm.}

We first show some facts for BCCs based on the definition.

\begin{fact}\label{lem:common}
Two BCCs share at most one common vertex.
\end{fact}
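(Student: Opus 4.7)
The plan is to argue by contradiction. Suppose two distinct BCCs $C_1$ and $C_2$ share two common vertices $u \neq v$. I will show $C' = C_1 \cup C_2$ is itself biconnected, which contradicts the maximality in the definition of $C_1$ (or $C_2$).

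To establish biconnectedness of $C'$, I need to verify that for every $w \in C'$, the induced subgraph on $C' \setminus \{w\}$ is connected. The key observation driving the proof is that since $\{u,v\} \subseteq C_1 \cap C_2$, at least one of $u, v$—call it $z$—is different from $w$. This $z$ belongs to both $C_1$ and $C_2$, and it survives the removal of $w$.

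The argument then reduces to a short case analysis on any two vertices $x, y \in C' \setminus \{w\}$. If $x$ and $y$ both lie in the same $C_i$, I use biconnectedness of $C_i$ directly (if $w \in C_i$) or the connectedness of $C_i$ (if $w \notin C_i$, in which case $C_i \setminus \{w\} = C_i$). If $x \in C_1$ and $y \in C_2$, I route $x$ to $z$ inside $C_1 \setminus \{w\}$ using biconnectedness of $C_1$, then $z$ to $y$ inside $C_2 \setminus \{w\}$ using biconnectedness of $C_2$; the fact that $z \neq w$ and $z \in C_1 \cap C_2$ is precisely what makes this concatenation legal.

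The main—and essentially only—subtlety is ensuring the intermediate vertex $z$ is both shared between the two components and distinct from $w$. This is exactly where the hypothesis of \emph{two} shared vertices is essential: with only one shared vertex $u$, choosing $w = u$ would leave no available bridge between the two components, and indeed $u$ would then be an articulation point separating them. So the proof both yields the fact and highlights why the bound ``one'' is tight.
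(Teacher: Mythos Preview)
Your proof is correct and follows essentially the same approach as the paper: assume two shared vertices, remove an arbitrary vertex $w$ from $C_1\cup C_2$, observe that a common vertex $z\ne w$ survives, and use it to connect everything, contradicting maximality. Your write-up is just slightly more explicit about the case analysis on where $x,y$ lie, but the argument is the same.
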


\begin{fact}\label{lem:cycle}
For a cycle in a graph, all vertices on the cycle are in the same BCC.
\end{fact}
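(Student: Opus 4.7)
The plan is to show that the vertex set of the cycle itself already satisfies the biconnectivity property from the definition, so by maximality it is contained in some BCC. Let $C = v_0 v_1 \cdots v_{k-1} v_0$ be the given cycle, and let $S = \{v_0, \ldots, v_{k-1}\}$ be its vertex set. The goal is to exhibit a single BCC containing all of $S$.

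First I would verify that $S$, viewed on its own, is ``biconnected'' in the sense of the definition: the cycle edges alone make $S$ connected in $G$; and for any $v_i \in S$, the remaining vertices are still connected in $G$ via the path $v_{i+1} v_{i+2} \cdots v_{i-1}$ (indices mod $k$) that survives in the cycle after deletion. Since a cycle in a simple graph has $k \geq 3$, this residual path is nonempty and connected. Thus $S$ itself is a subset of $V$ that is connected and remains connected after removing any one of its vertices.

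Next I would invoke maximality: because $S$ has the biconnectivity property, it must be contained in at least one \emph{maximal} such subset, which by the definition in \cref{sec:intro} is a BCC $C^*$. Since $S \subseteq C^*$, every vertex of the cycle lies in the same BCC $C^*$, which is the claim. (Uniqueness of $C^*$ is not actually needed for the statement, but if desired it follows from \cref{lem:common}: two BCCs sharing the $\geq 2$ vertices of $S$ would share more than one common vertex, contradicting \cref{lem:common}.)

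The main subtlety, rather than an obstacle, is keeping the notion of ``connected'' straight: ``$S$ is connected'' here means connected as a vertex set in $G$ (witnessed by edges of $G$, in particular by the cycle edges), not connected in some artificially restricted subgraph. Once this is made precise, the argument reduces to the observation that deleting one vertex from a cycle leaves a path, so I would expect the whole proof to take only a few lines.
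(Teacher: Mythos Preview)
Your proposal is correct and follows essentially the same approach as the paper: both arguments show that deleting any single vertex from the cycle leaves the remaining cycle vertices connected (the paper phrases this via the two internally disjoint arc-paths between any pair $v_i,v_j$, you phrase it as ``a cycle minus a vertex is a path''), and then conclude that the cycle vertices lie in one BCC. Your version is slightly more explicit in spelling out the maximality step from ``$S$ has the biconnectivity property'' to ``$S$ is contained in some BCC,'' which the paper leaves implicit.
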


\begin{lemma}\label{lem:bcc-connected}
Given a graph $G$, vertices in each BCC $C\subseteq V$ must also be connected in an arbitrary spanning tree $T$ for $G$.
\end{lemma}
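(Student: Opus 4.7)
The plan is to prove the stronger claim that, for any two vertices $u, v \in C$, the unique tree path $u \treepath v$ in $T$ consists entirely of vertices of $C$; this immediately yields that the induced subgraph $T[C]$ is connected. First I will dispose of the degenerate cases: a single-vertex BCC is trivially connected in $T$, and a two-vertex BCC $\{u, v\}$ corresponds to a bridge $(u, v)$, which must lie in every spanning tree, so $u \treepath v$ is simply the edge $(u, v)$ and already stays in $C$.

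Assume $|C| \geq 3$. Then $G[C]$ is 2-vertex-connected, so $u$ and $v$ lie on a common simple cycle $\mathcal{C}_0$ of $G$. By \cref{lem:cycle}, every vertex of $\mathcal{C}_0$ belongs to a single BCC $C^\star$; since $u$ and $v$ are two distinct vertices shared between $C$ and $C^\star$, \cref{lem:common} forces $C^\star = C$. Hence every vertex and every edge of $\mathcal{C}_0$ lies in $C$.

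Now fix any $w$ on $u \treepath v$ and argue $w \in C$. Write $\mathcal{C}_0 = Q \cup Q'$ as the union of two internally vertex-disjoint $u$--$v$ paths, and transform $Q$ into a walk $W$ in $T$ by replacing each non-tree edge $(a, b)$ appearing on $Q$ with the tree path $a \treepath b$. Then $W$ is a walk in $T$ from $u$ to $v$. The key observation is that any such $T$-walk must visit every vertex on $u \treepath v$, because removing any interior vertex of $u \treepath v$ disconnects $u$ from $v$ inside $T$; so $W$ visits $w$. Either (i) $w$ already appears on $Q \subseteq \mathcal{C}_0$, in which case $w \in C$ by the previous paragraph, or (ii) $w$ lies strictly inside some $a \treepath b$ where $(a, b)$ is a non-tree edge on $Q$. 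In case (ii), the fundamental cycle $\mathcal{F} = \{(a, b)\} \cup (a \treepath b)$ is a simple cycle containing $w$ together with the two distinct vertices $a, b \in C$, so one more application of \cref{lem:cycle} and \cref{lem:common} pins the BCC of $\mathcal{F}$ to $C$ and gives $w \in C$.

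The main subtlety I anticipate is careful bookkeeping of BCC identities. An articulation point can belong to multiple BCCs, so showing that $w$ shares \emph{some} BCC with a vertex of $C$ is not enough. The proof sidesteps this by always exhibiting \emph{two} distinct vertices on each auxiliary cycle that are already known to lie in $C$, and then invoking \cref{lem:common} to conclude that the cycle's BCC is exactly $C$. Once that bookkeeping is in place, the walk-in-tree step and the small-$|C|$ boundary cases are routine.
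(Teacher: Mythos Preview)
Your proof is correct and takes a genuinely different route from the paper's. The paper argues by contradiction: it assumes $C$ splits into at least two connected pieces in $T$, picks the shallowest vertices $u,v$ of two such pieces (using the rooted structure of $T$), and shows that $w=p(v)$ is biconnected with every vertex of $C$ by exhibiting two internally disjoint paths (one along the tree, one through $C$); this contradicts the maximality of $C$ and the choice of $v$ as shallowest. Your argument is instead direct and proves the strictly stronger statement that the tree path $u\treepath v$ lies entirely in $C$: you take one arc $Q$ of a cycle through $u,v$ in $G[C]$, expand its non-tree edges into tree paths to obtain a $T$-walk that must hit every interior vertex $w$ of $u\treepath v$, and then pin down $w\in C$ via the fundamental cycle of the relevant non-tree edge together with \cref{lem:cycle} and \cref{lem:common}. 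What your approach buys is a cleaner, rootedness-free argument that reuses the two Facts already established in the paper and yields the subtree property of $T[C]$ for free; what the paper's approach buys is that it stays closer to the parent/shallowest-node vocabulary used throughout the rest of \cref{sec:correctness}, making the subsequent definition of the BCC head feel like a natural continuation.
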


Since each BCC $C$ must be connected in the spanning tree in $T$, there must exist a unique shallowest node in this BCC on $T$.
We call this shallowest node the \defn{\bcchead{}} of the BCC~$C$, and denote it as $h_C$.  

\begin{lemma}\label{lem:art-head}
Each non-root \bcchead{} is an articulation point. An articulation point must be a \bcchead{}.
\end{lemma}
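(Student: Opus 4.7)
The plan is to prove both directions separately, leveraging the given facts (\cref{lem:common}, \cref{lem:bcc-connected}) together with the standard characterization that a vertex is an articulation point iff it belongs to strictly more than one BCC. This characterization follows essentially from the definitions of BCCs and articulation points, and I would either invoke it directly or verify it in a sentence.

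\textbf{Forward direction (non-root BCC head $\Rightarrow$ articulation point).} Let $h_C$ be the BCC head of some BCC $C$, with $h_C$ not the root of $T$. Because $h_C$ is by definition the shallowest vertex of $C$ in $T$, its parent $p(h_C)$ is not in $C$. Now consider the tree edge $h_C$--$p(h_C)$, which belongs to some BCC $C'$ of $G$. Both $h_C$ and $p(h_C)$ lie in $C'$, so $C'\neq C$. Hence $h_C$ lies in at least two distinct BCCs ($C$ and $C'$), which by the standard characterization means $h_C$ is an articulation point.

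\textbf{Reverse direction (articulation point $\Rightarrow$ BCC head).} Let $v$ be an articulation point, so $v$ belongs to at least two BCCs $C_1,C_2,\ldots,C_k$ with $k\geq 2$. Suppose for contradiction that $v$ is not the head of any $C_i$. Fix any $i$. Since $v\in C_i$ and $v\neq h_{C_i}$, the vertex $h_{C_i}$ is a strictly shallower ancestor of $v$ (as $C_i$ is connected in $T$ by \cref{lem:bcc-connected} and $h_{C_i}$ is the shallowest node of $C_i$). The tree path $h_{C_i}\treepath v$ must therefore lie entirely in $C_i$ (again by \cref{lem:bcc-connected} applied to the subtree-connectivity of $C_i$ in $T$), and in particular $p(v)\in C_i$. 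This holds for every $i$, so both $v$ and $p(v)$ are common vertices of $C_1$ and $C_2$. This contradicts \cref{lem:common}, which says two distinct BCCs share at most one vertex. Hence $v$ must be the head of some $C_i$.

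The main obstacle, if any, is carefully justifying that the tree path from $h_{C_i}$ to $v$ lies inside $C_i$. \cref{lem:bcc-connected} guarantees that the vertex set of $C_i$ induces a connected subgraph of $T$, and since a connected subgraph of a tree is a subtree, every tree path between two members of $C_i$ stays inside $C_i$; this is exactly what is needed to conclude $p(v)\in C_i$. Everything else reduces to straightforward applications of the definitions and \cref{lem:common}.
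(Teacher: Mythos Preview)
Your proof is correct, but it follows a different route from the paper's own argument.

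The paper argues both directions directly from the definitions, without invoking the characterization ``$v$ is an articulation point iff $v$ lies in at least two BCCs.'' For the forward direction, the paper takes a child $c$ of $h_C$ inside $C$, assumes $h_C$ is not an articulation point, and shows that then $p(h_C)$ would be biconnected with $c$, contradicting that $h_C$ is the shallowest vertex of $C$. For the reverse direction, the paper assumes $a$ is not a BCC head, deduces (via \cref{lem:bcc-connected}) that every child of $a$ is in the same BCC as $p(a)$, and then argues that removing $a$ cannot disconnect $G$ (each child's subtree still reaches $p(a)$), contradicting that $a$ is an articulation point.

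Your approach is more conceptual and shorter: you reduce everything to the standard block--cut characterization together with \cref{lem:common} and \cref{lem:bcc-connected}. The forward direction (find a second BCC through the tree edge $h_C$--$p(h_C)$) is particularly clean. The reverse direction is also slicker than the paper's, since it avoids reasoning about what happens to the various subtrees of $a$ upon removal and instead pins down $p(v)$ as a second shared vertex of $C_1$ and $C_2$. The price you pay is having to justify the characterization itself; you note this, and indeed both directions of that characterization follow from \cref{lem:common} and \cref{lem:cycle} with a short argument, so nothing circular is introduced. One small remark: in your reverse direction you use $p(v)$, so you should dispose of the root case first (trivial, since the root is the shallowest vertex of any BCC it belongs to).
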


\begin{lemma}\label{lem:function}
The function \insketch{} (\cref{line:bcc-cond}) in \cref{alg:bcc} can correctly skip the fence and back edges.
\end{lemma}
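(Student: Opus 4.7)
The plan is a case analysis on the four edge categories, using a standard Euler-tour characterization of ancestry: after \rootst{}, vertex $x$ is an ancestor of $y$ on $T$ iff $\first[x]\leq\first[y]\leq\last[x]$, and consequently every $y'\in T_x$ satisfies $\first[x]\leq\first[y']\leq\last[x]$. I will verify that \insketch{} returns \False{} on fence and back edges and \True{} on plain tree edges and cross edges.

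For a non-tree edge $u$--$v$, the Euler-tour property immediately gives that \mf{Back}$(u,v)$ returns \True{} iff $\first[v]\in[\first[u],\last[u]]$, i.e.\ iff $u$ is an ancestor of $v$. Hence \mf{Back}$(u,v)\lor\mf{Back}(v,u)$ holds iff one endpoint is a proper ancestor of the other, i.e.\ iff $u$--$v$ is a back edge. So the non-tree branch of \insketch{} returns \False{} on back edges and \True{} on cross edges, as desired.

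For a tree edge, let WLOG $u=p(v)$. I first show that the ``wrong-direction'' call \mf{\Fence}$(v,u)$ always returns \False{}: because $u\in T_u$, $\first[u]$ contributes to $\low[u]$, so $\low[u]\leq\first[u]<\first[v]$ and the clause $\first[v]\leq\low[u]$ fails. It therefore suffices to prove \mf{\Fence}$(u,v)$ is \True{} iff $u$--$v$ is a fence edge. For the forward direction, assume it is a fence. Every $u'\in T_v$ lies in $T_u$, so $\first[u']\in[\first[u],\last[u]]$; and for every non-tree edge $(u',y)$ with $u'\in T_v$ the fence condition forces $y\in T_u$, hence $\first[y]\in[\first[u],\last[u]]$. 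Taking min/max over all values contributing to $\low[v]$ and $\high[v]$ yields $\first[u]\leq\low[v]$ and $\high[v]\leq\last[u]$.

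For the converse, suppose both inequalities hold but some $(x,y)\in E$ witnesses a non-fence, i.e.\ $x\in T_v$ and $y\notin T_u$. The tree-edge case is impossible: inside $T_v$ every tree edge has both endpoints in $T_v\subseteq T_u$, and the only tree edge leaving $T_v$ is $v$--$u$ itself, whose other endpoint $u$ lies in $T_u$. So $(x,y)$ is a non-tree edge, meaning $\first[y]$ is a contributor to both $\low[v]$ and $\high[v]$; but $y\notin T_u$ forces $\first[y]<\first[u]$ or $\first[y]>\last[u]$, each contradicting one of the assumed inequalities. Combining both directions, \insketch{} classifies every edge correctly. I expect the main technical obstacle to be this converse step, because it requires enumerating all edge types incident to $T_v$ and confirming that every ``escape'' to outside $T_u$ is in fact captured by $\low[v]$ or $\high[v]$.
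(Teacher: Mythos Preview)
Your proof is correct and follows essentially the same approach as the paper: both verify the Euler-tour ancestry characterization for \mf{Back}, then prove \mf{\Fence}$(p(v),v)$ holds iff the edge is a fence by bounding $\low[v]$ and $\high[v]$ against $[\first[u],\last[u]]$ in one direction and exhibiting a violating non-tree neighbor $y\notin T_u$ in the other. You are in fact slightly more thorough than the paper, since you explicitly dispose of the ``wrong-direction'' call \mf{\Fence}$(v,u)$ and explicitly rule out tree edges as escape witnesses, two points the paper leaves implicit.
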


\begin{figure}[t]
  \centering
  \includegraphics[width=\columnwidth]{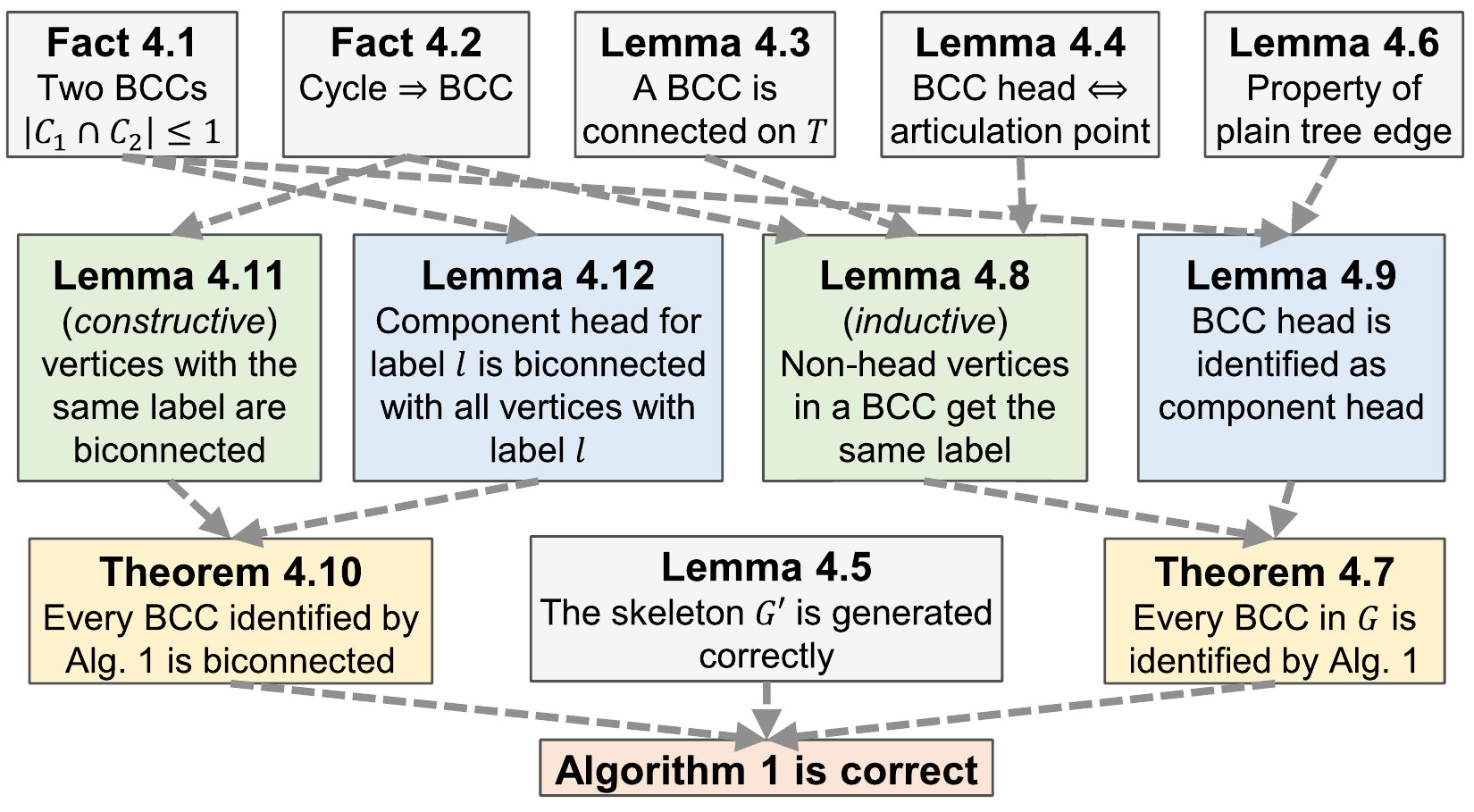}
  \caption{The structure of the correctness proof for \cref{alg:bcc}.  
  }\label{fig:proof}
  \Description[<short description>]{<long description>}
\end{figure}

Next, we show a useful property of the \plain{} tree edges.

\begin{lemma}\label{lem:non-critical}
For a \plain{} tree edge $x$--$y$ where $x$ is the parent of $y$, let $z$ be $x$'s parent, then $x,y,z$ are biconnected.
\end{lemma}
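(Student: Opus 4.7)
The plan is to construct a simple cycle that passes through all three of $x$, $y$, and $z$ and then invoke \cref{lem:cycle} to conclude they lie in the same BCC. The key ingredient is the defining property of a \plain{} edge: since $x$--$y$ with $x=p(y)$ is not a \fence{} edge, there must exist an ``escape'' edge $(a,b)\in E$ with $a\in T_y$ and $b\notin T_x$, and this edge is what lets the cycle route around $x$.

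Before building the cycle, I would record a small structural observation: such a witness $(a,b)$ cannot be a tree edge. A brief case split on whether $b=p(a)$ or $a=p(b)$ shows that, in either case, $a\in T_y$ forces $b\in T_y\subseteq T_x$, contradicting $b\notin T_x$. Hence $(a,b)$ is a back edge or a cross edge. This step is not strictly essential for the cycle construction, but it pins down what a plain witness looks like and rules out degenerate situations.

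Next I would exhibit two internally vertex-disjoint $y$-to-$z$ paths whose union is the desired simple cycle. The first is the tree path $y$--$x$--$z$, whose only internal vertex is $x$. The second is the detour that starts at $y$, descends along tree edges inside $T_y$ to $a$, traverses the non-tree edge $(a,b)$, and then ascends along the unique tree path $b\treepath z$. The interior of this detour decomposes into three pieces: the vertices on $y\treepath a$ other than $y$ (all lying in $T_y\setminus\{y\}$, hence distinct from $x$ and from $z$), the two middle endpoints $a\in T_y$ and $b\notin T_x$, and the vertices on $b\treepath z$ other than $z$. The crucial structural claim is that this last piece lies entirely outside $T_x$: in a rooted tree, a path can enter $T_x$ from outside only through the tree edge $z$--$x$, and the path from $b$ to $z$ does not use that edge since it terminates at $z$.

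This disjointness is the main obstacle and the whole reason the lemma is non-trivial; it is exactly where the hypothesis $b\notin T_x$ is used. Once it is in place, the two paths concatenate into a simple cycle containing $x$, $y$, and $z$, and \cref{lem:cycle} immediately yields that the three vertices are biconnected.
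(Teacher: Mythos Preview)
Your proposal is correct and follows exactly the paper's approach: use the defining property of a plain edge to obtain an escape edge $(a,b)$ with $a\in T_y$, $b\notin T_x$, form the cycle $y\treepath a$--$b\treepath z$--$x$--$y$, and invoke \cref{lem:cycle}. The paper's proof is a terse three-line version of your argument; your additional verification that the two $y$-to-$z$ paths are internally vertex-disjoint (so the cycle is simple) is a detail the paper leaves implicit.
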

\begin{proof}
  Since $x$--$y$ is not a \fence{} edge, there must be an edge $a$--$b$, s.t.\ $a\in T_y$ and $b\notin T_x$.
  The cycle $y$\treepath$a$--$b$\treepath$z$--$x$--$y$ then contains $x$, $y$, and $z$.
  Due to \cref{lem:cycle}, $x$, $y$, and $z$ are in the same BCC.
\end{proof}

Next, we show that \cref{alg:bcc} can correctly identify all BCCs.
We will show two directions.
First, if two vertices $u$ and $v$ are biconnected, \cref{alg:bcc} must put them in a BCC.
Second, for any two vertices $u$ and $v$ in a BCC found by \cref{alg:bcc}, they must be biconnected.

\begin{theorem}\label{thm:algo-to-bcc}
For $u,v\in V$, if they are biconnected, \cref{alg:bcc} assigns them to the same BCC.
\end{theorem}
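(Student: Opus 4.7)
The plan is a case analysis on how $u$ and $v$ sit relative to the BCC head $h_C$ of their common BCC $C$. The trivial case $u=v$ aside, I handle (a) $u,v \in C\setminus\{h_C\}$, where I must show $l[u]=l[v]$, and (b) one of them equals $h_C$, where I must show $h_C$ becomes the component head of the other's label (or they already share a label).

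Case (a) is the core. Because $u,v,h_C$ are distinct, $|C|\ge 3$ and $G[C]$ is 2-vertex-connected, so $G[C]\setminus\{h_C\}$ is connected and contains a $u$-to-$v$ path $P$. I would convert $P$ into a walk in $G'$ edge by edge: cross edges are already in $G'$; each back edge $a$--$b$ (with $a$ an ancestor of $b$) is replaced by the tree path $b$\treepath$a$, which by \cref{lem:bcc-connected} stays inside $C$; plain tree edges are already in $G'$. The genuine obstacle is a fence tree edge $x$--$y$ (with $x=p(y)$) whose endpoints both lie in $C\setminus\{h_C\}$; such an edge can occur inside a BCC, as the running example in \cref{sec:bcc-details} illustrates, so I cannot hope that every tree edge on $P$ is plain.

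To route around such a fence edge in $G'$, I combine its definition with 2-vertex-connectivity. Since $x\ne h_C$ and $x\in C$, $h_C$ is a strict ancestor of $x$ and $h_C\notin T_x$; Menger's theorem applied to $G[C]$ supplies a $y$-to-$h_C$ path that avoids $x$. This path must leave $T_x$ through a non-tree edge $a$--$b$ with $a\in T_x\setminus\{x\}$ and $b\notin T_x$, and the fence condition forbids $a\in T_y$, so $a$ lies in some sibling subtree $T_{y_i}$; then $a$--$b$ certifies that the tree edge $x$--$y_i$ is plain---and hence in $G'$---regardless of whether $a$--$b$ is a back or cross edge. The same $y$-to-$h_C$ path's first exit from $T_y$ cannot be the tree edge $y$--$x$ and cannot, by the fence condition, leave $T_x$, so it is a cross edge $c$--$d$ from $T_y$ into some sibling subtree $T_{y_j}$, which is also in $G'$. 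Chaining these two $G'$-edges with in-$G'$ connectivity inside $T_y,T_{y_j},\dots,T_{y_i}$---handled recursively by the same argument applied to the strictly smaller sibling subtrees---produces a $y$-to-$x$ walk in $G'$.

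Case (b) reduces to (a). Let $u'$ be the child of $h_C$ on the tree path down to $v$; by \cref{lem:bcc-connected}, $u'\in C$. I first show $h_C$--$u'$ is a fence edge: any edge $a$--$b$ with $a\in T_{u'}$ and $b\notin T_{h_C}$ would, together with the tree path from $a$ to $b$ (which necessarily passes through $h_C$), form a cycle through $h_C,u',a,b$; by \cref{lem:cycle} it sits in a single BCC, which by \cref{lem:common} must equal $C$ since $h_C,u'$ already belong to $C$, forcing $b\in C\subseteq T_{h_C}$, a contradiction. Case (a) gives $l[u']=l[v]$, and then either $l[h_C]\ne l[u']$ (so the fence-edge loop of \cref{alg:bcc} records $h_C=p(u')$ as the component head of $l[u']=l[v]$) or $l[h_C]=l[v]$ directly; either way $u=h_C$ and $v$ land in the same algorithmic BCC. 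The main obstacle throughout is the fence-edge routing in case (a): the fence definition only restricts edges out of $T_y$, so fence edges can legitimately appear strictly inside a BCC when $x\ne h_C$, and assembling the $G'$ walk from $y$ up to $x$ requires piecing together cross edges into sibling subtrees with plain tree edges within them; making that chaining fully rigorous, probably by induction on the size of $T_x$, is the technical heart of the proof.
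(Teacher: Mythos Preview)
Your proposal is correct and uses the same two-case split as the paper (\cref{lem:nonhead-to} for $u,v\in C\setminus\{h_C\}$, \cref{lem:head-to} for the head case), and both arguments hinge on the same mechanism: exploit 2-connectivity of $G[C]$ to find a path avoiding the parent of a fence edge, and use that path's first exit from the parent's subtree to certify a plain tree edge back up. The one difference worth noting is the packaging of the induction in case~(a). The paper strengthens the hypothesis to ``$T_w\cap C$ is connected in $G'$ for every $w\in C\setminus\{h_C\}$'' and inducts bottom-up on depth: when a fence edge $u$--$v$ with $u=p(v)$ must be bypassed, it takes a path from $v$ to $p(u)$ in $G[C]\setminus\{u\}$, truncates it at its first exit $x_j$ from $T_u$, observes that the prefix lies entirely at depth $\ge d$ so the hypothesis handles every back/fence edge on it in one stroke, and notes that the tree path $x_j\treepath u$ is plain. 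This single strengthened hypothesis absorbs the recursive chaining you flag (connecting $y$ to $c$, $d$ to $y_j$, $a$ to $y_i$ inside sibling subtrees) into one clean appeal, whereas your subtree-size induction on individual fence edges has to reassemble those pieces by hand. For case~(b), the paper goes slightly further and proves $l[h_C]\ne l[c]$ outright rather than branching on it as you do; your version also suffices, and your proof that $h_C$--$u'$ is a fence edge, while correct, is not strictly needed since the assignment loop tests $l[u]\ne l[p(u)]$ directly.
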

To prove \cref{thm:algo-to-bcc}, we discuss two cases: 1) one of $u$ and $v$ is a \bcchead{}, and 2) neither of them is a \bcchead{}.

\begin{lemma}\label{lem:nonhead-to}
For a BCC $C$ and two vertices $u,v\in C\setminus\{h_C\}$, they are connected in the \resgraph{} $G'$ and will get the same label in \cref{alg:bcc}.
\end{lemma}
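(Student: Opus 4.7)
The plan is to split the statement into two sub-claims: (i) every $v \in C \setminus \{h_C\}$ is $G'$-connected to the unique child $w_v$ of $h_C$ that is a tree-ancestor of $v$ (such $w_v$ lies in $C$ by Lemma \ref{lem:bcc-connected}); (ii) any two children of $h_C$ lying in $C$ are $G'$-connected. Together these place all of $C \setminus \{h_C\}$ in a single CC of $G'$, which is what is claimed.

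For (i) I would induct on the tree-distance from $v$ to $h_C$. The base case $v = w_v$ is trivial. For the inductive step, $p(v) \in C \setminus \{h_C\}$ (by Lemma \ref{lem:bcc-connected}), and the inductive hypothesis gives a $G'$-path from $p(v)$ to $w_v$, so it suffices to build a $G'$-path from $v$ to $p(v)$. If $(p(v),v)$ is plain, the edge itself lies in $G'$. Otherwise $(p(v),v)$ is fence; here I invoke biconnectivity of $C$ (valid because $p(v) \neq h_C$, so $C \setminus \{p(v)\}$ is still connected in $G$) to obtain a path $Q$ in $G$ from $v$ to $p(v)$ that avoids $p(v)$ internally and does not use the tree edge $(p(v),v)$. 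The fence condition forbids any edge from $T_v$ to leave $T_{p(v)}$, so the first edge by which $Q$ exits $T_v$ must land in a sibling subtree $T_s$ of $v$, and that edge is necessarily a cross edge (not a tree edge, not a back edge, since $T_v$ and $T_s$ have no ancestor-descendant relation), hence lies in $G'$. A symmetric analysis at the other end of $Q$, combined with extending the argument via biconnectivity between $v$ and $h_C$ to force some sibling subtree $T_{s'}$ of $v$ to contain an edge escaping $T_{p(v)}$, shows that $Q$ enters $p(v)$ through a plain tree edge $(p(v),s')$. The intermediate portion of $Q$ traversing sibling subtrees is then converted into a $G'$-path by a strong induction on $|T_{p(v)}|$, formulated as the strengthened statement ``any two vertices of $C$ strictly inside a common subtree $T_x$ with $x \in C$ are $G'$-connected.''

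For (ii), given distinct children $w, w' \in C$ of $h_C$, biconnectivity of $C$ produces a $w$-to-$w'$ path in $G$ avoiding $h_C$; the same subtree-hopping reduction plus (i) within each visited subtree converts this into a $G'$-path between $w$ and $w'$. The main technical obstacle I anticipate is the fence subcase of (i): making the recursion on the detour path $Q$ formally terminate requires a carefully chosen strengthened inductive statement so that each sibling-subtree segment qualifies as a strictly smaller instance on $|T_{p(v)}|$, thereby avoiding a circular dependence on the very claim being proved.
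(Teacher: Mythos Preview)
Your outer induction in (i) runs in the wrong direction. When $(p(v),v)$ is a fence and you detour through sibling subtrees, the segments you must convert into $G'$-paths lie at depths \emph{at least} that of $v$; your top-down hypothesis only speaks about vertices \emph{closer} to $h_C$, so it gives you nothing here. You see this and propose a nested strong induction on $|T_x|$, but that nested induction \emph{is} the proof: once you have ``$T_x\cap C$ is $G'$-connected for every $x\in C$'' by induction on $|T_x|$ (equivalently, on depth bottom-up), both (i) and (ii) follow immediately by taking $x$ among the children of $h_C$, and the top-down outer layer is redundant scaffolding. The paper does exactly this single bottom-up induction: for $u\in C\setminus\{h_C\}$ at depth $d-1$ with a fence edge to a child $v$, it takes a path from $v$ to $p(u)$ in $C\setminus\{u\}$, follows it until the first vertex $x_j$ adjacent to something outside $T_u$, and observes that the tree path $x_j\treepath u$ consists entirely of plain edges while the prefix $v\to\cdots\to x_j$ stays at depth $\ge d$ inside $C$, so the bottom-up hypothesis directly converts its back/fence edges into $G'$-paths.

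A secondary gap: your path $Q$ terminates at $p(v)$, so its first exit from $T_v$ may land at $p(v)$ itself via a back edge (and its last edge into $p(v)$ may likewise be a back edge or a fence tree edge), yielding no usable $G'$-edge at either end. The path that works is one to a target \emph{outside} $T_{p(v)}$---either $p(p(v))$ as the paper uses, or $h_C$ as you separately invoke---because such a path is forced to leave $T_{p(v)}$, and the fence condition on $(p(v),v)$ then guarantees it first passes through a sibling subtree and that the escape point $x_j$ yields a plain tree path $x_j\treepath p(v)$. You gesture at this with the ``biconnectivity between $v$ and $h_C$'' remark but then conflate that path with $Q$; drop $Q$ entirely and work directly with the path to $p(p(v))$.
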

\begin{proof}
  If all tree edges connecting $C\setminus\{h_C\}$ are \plain{} tree edges, $u$ and $v$ are already connected in $G'$.
  Next, we show that the two endpoints of every \fence{} edge are also connected in $G'$.
  To do so, we first sort (only conceptually) all vertices in $C\setminus\{h_C\}$ by their depth in $T$.
  Then we inductively show from bottom up (deep to shallow) that, given a vertex $v\in C$, $T_v\cap C$ ($v$'s subtree in $C$) is connected in $G'$.

  The base case is the deepest vertices in $C\setminus\{h_C\}$.
  In this case, their subtree contains only one vertex so they are connected.

  We now consider the inductive step---if for all vertices with depth $\ge d$, their subtrees in $C$ are connected in $G'$,
  then for all vertices with depth $d-1$, their subtrees in $C$ are also connected in $G'$.
  Consider a vertex $u\in C\setminus\{h_C\}$ with depth $d-1$.
  If $u$ has only one child $v$ in $C$, then $u$--$v$ is a \plain{} tree edge since otherwise $v$'s subtree cannot escape $u$'s subtree and $u$ is an articulation point (disconnecting $v$ and $p(u)$), contradicting \cref{lem:art-head}.
  Assume $u$ has multiple children $c_1,\ldots, c_k$ in $C$.
  Let $u$--$v$ be a \fence{} edge that is not in $G'$, where $v=c_i$ is a child of $u$.
  We will show that $u$ and $v$ are still connected in $G'$.

  Since $u$ is not a \bcchead{}, $p(u)$ must also be in $C$.
  Based on the definition of BCC, if we remove $u$,
  $v$ and $p(u)$ are still connected $C$.
  Let the path be $P=v$--$x_1$--$x_2$--...--$x_k$--$p(u)$ where $x_i\in C$ and $x_i\ne u$.
  We will construct a path in $G'$ from $P$ that connects $v$ and $u$.
  Let $x_{j+1}$ be the first vertex on path $P$ that is not in $T_u$. We will use the path $v=x_0$--$x_1$--$x_2$--...--$x_j$.
  All nodes in this path have depths $\ge d$.
  Due to the induction hypothesis, if some of the edges are back or \fence{} edges, we can replace them with the paths in $G'$, and denote this path as $P'$.
  Then, since $x_{j+1}\notin T_u$ is connected to $x_j\in T_u$, all edges on tree path $x_j$\treepath$u$ are \plain{} tree edges.
  As a result, $u$ and $v$ are connected in $G'$ using the path $P'$ from $v$ to $x_j$, and the tree path from $x_j$ to $u$ (all edges are in~$G'$).
  By the induction, all vertices in $C\setminus\{h_C\}$ are connected in $G'$, and hence get the same label after \cref{line:lastcc}.
\end{proof}

\begin{lemma}\label{lem:head-to}
Any \bcchead{} will be correctly identified as a component head in \cref{alg:bcc}.
\end{lemma}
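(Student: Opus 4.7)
The plan is to locate, for each BCC $C$ with head $h_C$ and $|C|\ge 2$, a child $u$ of $h_C$ in $T$ that triggers the assignment at \cref{line:component-head} of \cref{alg:bcc}. Concretely, I need $p(u)=h_C$, $l[u]=\ell$ where $\ell$ is the common label of $C\setminus\{h_C\}$ given by \cref{lem:nonhead-to}, and $l[h_C]\ne\ell$ (equivalently $l[u]\ne l[h_C]$, the loop guard). Existence of such a $u\in C$ is immediate: by \cref{lem:bcc-connected}, $C$ is connected in $T$, and $h_C$ is its shallowest vertex, so $h_C$ has at least one child in $C$. I would then show that $h_C\text{--}u$ is a fence edge. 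When $h_C$ is the root this is vacuous because $T_{h_C}=V$, so no edge can escape $T_{h_C}$. Otherwise, assume for contradiction that $h_C\text{--}u$ is plain; \cref{lem:non-critical} would place $p(h_C),h_C,u$ in a single BCC $C^\star$. Since $C^\star\cap C$ already contains $\{h_C,u\}$, \cref{lem:common} forces $C^\star=C$, putting $p(h_C)\in C$ and contradicting $h_C$ being the shallowest vertex of $C$.

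The main obstacle is proving $l[h_C]\ne\ell$. The root case is short: if $h_C$ is the root, every tree edge out of $h_C$ is fence (by the same vacuous argument) and $h_C$ has no cross edges because every other vertex is a descendant of $h_C$. Hence $h_C$ is isolated in the skeleton $G'$, so its CC is $\{h_C\}$ and $l[h_C]$ is a singleton label different from $\ell$. In the non-root case, suppose for contradiction that $l[h_C]=\ell$. Then $h_C$ and $u$ are in the same CC of $G'$, so there is a path $P$ in $G'$ from $h_C$ to $u$. Because $h_C\text{--}u$ is a fence edge, it is not in $G'$, so $P$ does not use it; adjoining this fence edge to $P$ produces a cycle $K$ in $G$. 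By \cref{lem:cycle}, every vertex of $K$ lies in one BCC, and since $h_C,u\in C$, that BCC is $C$. In particular, the first vertex $x_1$ of $P$ after $h_C$ lies in $C$.

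I would finish with a three-way case analysis on the edge $h_C\text{--}x_1\in G'$. If $x_1=p(h_C)$, then $p(h_C)\in C$, violating shallowness. If $x_1$ is a child of $h_C$ with $h_C\text{--}x_1$ a plain tree edge, then applying the fence argument above to the child $x_1\in C$ forces $h_C\text{--}x_1$ to be a fence edge, contradicting the fact that it sits in $G'$. Finally, if $h_C\text{--}x_1$ is a cross edge, then $x_1$ is neither an ancestor nor a descendant of $h_C$, so the (unique) tree path from $h_C$ to $x_1$ routes through $\mathrm{LCA}(h_C,x_1)$, a strict ancestor of $h_C$; by \cref{lem:bcc-connected} that tree path lies entirely in $C$, again exhibiting a vertex shallower than $h_C$ inside $C$. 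All three subcases contradict the shallowness of $h_C$, so $l[h_C]\ne\ell$. Combined with $l[u]=\ell$ and $p(u)=h_C$, \cref{line:component-head} of \cref{alg:bcc} assigns $h_C$ as the component head of the label $\ell$ representing the BCC $C$, which is exactly what the lemma claims.
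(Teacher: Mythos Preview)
Your proof is correct and shares the paper's overall architecture: pick a child $u\in C$ of $h_C$, show $h_C\text{--}u$ is a fence edge (root case vacuous; non-root via \cref{lem:non-critical} and \cref{lem:common}), then derive a contradiction from a hypothetical $G'$-path between $h_C$ and $u$. Where you diverge is in how the contradiction is extracted. The paper takes the last vertex $t$ on the path before $h_C$ and splits by the position of $t$ relative to $T_{h_C}$; in the hard case $t\notin T_{h_C}$ it traces the first edge on the path that escapes $T_{h_C}$, builds a specific cycle, and invokes \cref{lem:non-critical} and \cref{lem:common} to place $p(h_C)$ into $C$. You instead close the path into a single cycle $K$ up front, use \cref{lem:cycle} and \cref{lem:common} once to conclude that the neighbor $x_1$ of $h_C$ on the path already lies in $C$, and then dispatch each edge type in one line---most notably handling the cross-edge case via \cref{lem:bcc-connected} (the induced subtree of $C$ in $T$ must contain $\mathrm{LCA}(h_C,x_1)$, a strict ancestor of $h_C$). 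The paper never invokes \cref{lem:bcc-connected} in this lemma and instead rebuilds a cycle in each case; your route is more uniform and shorter, at the cost of relying on one more previously proved lemma. Both arguments are valid and reach the same conclusion.
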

\begin{proof}
  Consider a BCC $C$ and its \bcchead{} $h_C$. Among all the children of $h_C$, a subset $S$ of them are in the same BCC $C$. Consider any $c\in S$.
  We will show that the edge $c$--$h_C$ must be identified correctly in \cref{line:component-head0}. 

  We first show that $c$--$h_C$ must be a \fence{}.
  If $h_C$ is the root of $T$, and in this case, all tree edges connecting to $h_C$ are \fence{} edges.
  Otherwise, this can be inferred from the contrapositive of \cref{lem:non-critical}.
  If $c$--$h_C$ is a \plain{} tree edge, $c$, $h_C$, and $p(h_C)$ must be biconnected, which means $p(h_C)$ is also in the BCC $C$.
  This contradicts the assumption that $h_C$ is the shallowest node (\bcchead{}) in the BCC.

  We then show that after we run the CC on the \resgraph{} $G'$ (\cref{line:lastcc}), $h_C$ and $c$ have different labels (i.e., $h_C$ and $c$ are not connected in~$G'$).
  Assume to the contrary that there exists a path $P$ from $c$ to $h_C$ on $G'$.
  Consider the last node $t$ on the path before $h_C$.
  Because $h_C$--$c$ is a fence edge and is ignored in $G'$, $c\ne t$.
  We discuss three cases.
  (1) $t$ is not in the $h_C$'s subtree $T_{h_C}$.
  Consider the first edge $x$--$y$ on the path $P$ such that $x\in T_{h_C}$ and $y\ne T_{h_C}$.
  Since $x$--$y$ escapes $h_C$'s subtree, the tree path $P'=x\,$\textasciitilde$\,h_C$ only contains \plain{} tree edges.
  Let $c'$ be $h_C$'s child on the path $P'$.
  From \cref{lem:non-critical}, $c'$, $h_C$, and $p(h_C)$ are biconnected.
  In this case, $h_C$--$c\,$\textasciitilde$\,x\,$\textasciitilde$\,c'$--$\,h_C$ is a cycle, and \cref{lem:cycle} shows that $c'$, $h_C$ and $c$ are biconnected.
  The contrapositive of \cref{lem:common} indicates that $c'$, $h_C$, $c$, and $p(h_C)$ are all biconnected, contradicting the assumption that $h_C$ is the \bcchead{} (the shallowest node in the BCC).
  (2) $t\in T_{h_C}$, but $t$ is not $h_C$'s child. This is impossible because $t$--$h_C$ is a back edge, which is not in $G'$.
  (3) $t$ is a child of $h_C$.
  This case is similar to (1). By replacing $c'$ in the previous proof by $t$, we can get the same contradiction.
  Combining all cases proves that there is no path in $G'$ between $h_C$ and its children in $C$,
  so $l[h_C]$ is different from the labels of its children in $C$.
\end{proof}

Combining \cref{lem:head-to,lem:nonhead-to}, we can prove \cref{thm:algo-to-bcc}.

We then show the other direction---all the BCCs computed by \cref{alg:bcc} are indeed biconnected.

\begin{theorem}\label{thm:algo-from-bcc}
  If two vertices $u$ and $v$ are identified as in the same BCC by \cref{alg:bcc}, they must be biconnected.
\end{theorem}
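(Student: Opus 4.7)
The plan is to analyze the two ways that \cref{alg:bcc} can declare $u$ and $v$ to share a BCC: either (a) $l[u] = l[v]$, so $u$ and $v$ lie in the same connected component of the \sketch{} $G'$, or (b) one of them, say $v$, is recorded as the component head attached to the label $l[u]$.

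The crux of case (a) is the following structural claim: for every edge $(x,y) \in E'$, both $x$ and $y$ lie in a common BCC $C$ of $G$, with neither being $h_C$. For a cross edge $x$--$y$, closing it with the tree path through $\mathrm{LCA}(x,y)$ yields a cycle, so by \cref{lem:cycle} the three vertices $x$, $y$, and $\mathrm{LCA}(x,y)$ share a BCC $C$; since $\mathrm{LCA}(x,y)$ is strictly shallower than either endpoint, neither can be $h_C$. For a \plain{} tree edge with $x = p(y)$, \cref{lem:non-critical} places $y$, $x$, and $p(x)$ into a common BCC $C$, from which $y \neq h_C$ and $x \neq h_C$ follow immediately; the degenerate case $x = r$ cannot occur because $T_r = V$ forces every tree edge incident to the root to be a \fence{} edge.

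Next I would propagate this edge-level claim to whole $G'$-paths. Suppose two edges of $G'$ meet at a vertex $w$ and are assigned (by the first step) to BCCs $C_1 \neq C_2$. Then $w \in C_1 \cap C_2$; by \cref{lem:common} this intersection is exactly $\{w\}$, making $w$ an articulation point, and by \cref{lem:art-head} a \bcchead{}. Using \cref{lem:bcc-connected} (each BCC induces a subtree of $T$), one checks that a non-root articulation point is non-head only in the unique BCC containing $p(w)$, while the root is a head of every BCC containing it; either way $w$ must equal $h_{C_1}$ or $h_{C_2}$, contradicting the first step. Hence any two adjacent edges of $G'$ share the same associated BCC, and by transitivity every connected component of $G'$ lies inside $C \setminus \{h_C\}$ for a single BCC $C$. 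This disposes of case (a).

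For case (b), the head $v$ is assigned via some vertex $w$ with $l[w] = l[u]$ and $v = p(w)$. Case (a) applied to $u, w$ yields a BCC $C$ with $u, w \in C \setminus \{h_C\}$. By \cref{lem:bcc-connected}, $C$ induces a connected subtree of $T$ rooted at $h_C$, so $h_C$ is an ancestor of $w$, the tree path $w\,$\treepath$\,h_C$ is contained in $C$, and $v = p(w)$ lies on this path; therefore $v \in C$, and $u, v$ are biconnected. The main obstacle will be the adjacency step in the third paragraph: cleanly combining \cref{lem:common,lem:art-head,lem:bcc-connected} together with the root-versus-non-root dichotomy to argue that no vertex can simultaneously fail to be a head of two distinct BCCs. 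Once that is in place, the labeling and head-assignment arguments are routine.
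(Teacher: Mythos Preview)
Your proposal is correct, and it takes a genuinely different route from the paper's proof of \cref{thm:algo-from-bcc}.

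For case~(a), the paper (in \cref{lem:cctobcc1}) works directly at the path level: given a $G'$-path $u=p_0\text{--}\cdots\text{--}p_k=v$, it shows for each internal $p_i$ that $p_{i-1}$ and $p_{i+1}$ remain connected in $G\setminus\{p_i\}$, via an explicit case split on whether $p_{i-1}$ and $p_{i+1}$ are children of $p_i$ in $T$, and then constructing detours through escaping edges. Your approach instead proves a single-edge claim (every $G'$-edge has both endpoints in some BCC $C$ with neither equal to $h_C$) and then chains edges via the observation that a vertex $w$ can fail to be the head of at most one BCC---the one containing $p(w)$. This is more structural: it avoids the three-way case analysis by pushing the work into \cref{lem:common,lem:bcc-connected,lem:non-critical}. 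Note that your detour through ``$w$ is an articulation point'' is not actually needed; once you know $w\neq h_{C_1}$ and $w\neq h_{C_2}$ you get $p(w)\in C_1\cap C_2$ directly, and \cref{lem:common} forces $C_1=C_2$.

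For case~(b), the paper (in \cref{lem:cctobcc2}) does a case split on whether the $G'$-component extends outside $T_c$, and in each case constructs an explicit cycle through $h$. Your argument is shorter: from case~(a) you already know $w\in C\setminus\{h_C\}$, and then \cref{lem:bcc-connected} places $p(w)$ on the tree path $w\treepath h_C\subseteq C$. One small point to tidy: when $u=w$ (the $G'$-component is a singleton) your invocation of case~(a) is vacuous; just observe that $w$ is not the root, so $w$ and $p(w)$ share a BCC in which $w$ is not the head, and proceed as before.
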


Similar to the previous proof, we consider two cases:
(1) none of the two vertices is a component head (they are connected in $G'$), proved in \cref{lem:cctobcc1},
and
(2) one of them is identified as a component head in \cref{line:component-head}, proved in \cref{lem:cctobcc2}.

\begin{lemma}\label{lem:cctobcc1}
  If two vertices $u$ and $v$ are connected in the \resgraph{} $G'$, they are biconnected.
\end{lemma}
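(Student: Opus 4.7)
The plan is to walk along the given $u$-to-$v$ path in $G'$ and, at each step, exhibit a simple cycle in $G$ that contains three consecutive path vertices. Combined with \cref{lem:cycle} and \cref{lem:common}, this will glue the successive BCCs into a single one that contains both $u$ and $v$.

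First I would establish the following local claim: \emph{for every non-root vertex $v$ and every edge $(v,a)\in E(G')$, the vertex $a$ remains connected to the root of $T$ in $G\setminus\{v\}$.} The proof is a short case analysis on the type of $(v,a)$. If $(v,a)$ is a cross edge and $\ell=\mathrm{lca}(v,a)$, then neither $a$ nor $v$ is an ancestor of the other, so the tree path $a$\treepath$\ell$\treepath$r$ (where $r$ denotes the root) never passes through $v$. If $(v,a)$ is a plain tree edge with $a$ a child of $v$, the definition of ``plain'' supplies an escape edge $x$--$y$ with $x\in T_a$ and $y\notin T_v$, yielding the $v$-avoiding route $a$\treepath$x$--$y$\treepath$r$. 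If $(v,a)$ is a plain tree edge with $a=p(v)$, the tree path $a$\treepath$r$ already avoids $v$. A useful boundary remark is that the root $r$ is \emph{isolated} in $G'$: all of its tree edges are fence edges (nothing escapes $T_r=V$), and it has no incident cross edges (it is an ancestor of every vertex); hence no vertex of a length-$\ge 1$ path in $G'$ can equal $r$.

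Given the local claim, I would induct on $i$ to show that $w_0,\ldots,w_i$ all lie in one BCC $C_i$, where $u=w_0,w_1,\ldots,w_k=v$ is the chosen path in $G'$. The base case $i=1$ is immediate: a cross edge $(w_0,w_1)$ sits on the cycle $w_0$--$w_1$\treepath$\mathrm{lca}(w_0,w_1)$\treepath$w_0$, so its endpoints share a BCC by \cref{lem:cycle}, while a plain tree edge puts its endpoints in a common BCC by \cref{lem:non-critical}. For the inductive step, the local claim (applied at $w_i$, which is not $r$ by the boundary remark) shows both $w_{i-1}$ and $w_{i+1}$ remain connected to $r$ in $G\setminus\{w_i\}$, so there is a simple $w_{i-1}$-to-$w_{i+1}$ path avoiding $w_i$. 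Combining it with $w_{i-1}$--$w_i$--$w_{i+1}$ yields a simple cycle through all three vertices, and \cref{lem:cycle} places them in some BCC $C'$. Since $C_i$ and $C'$ share two distinct vertices $w_{i-1}$ and $w_i$, \cref{lem:common} forces $C'=C_i$, hence $w_{i+1}\in C_i$. Taking $i=k$ shows that $u$ and $v$ belong to the same BCC.

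The main obstacle I anticipate is the case analysis in the local claim, in particular carefully verifying that each constructed route avoids the pivot vertex $v$; handling the root of $T$ cleanly via the isolation remark is also essential so that the induction never encounters a degenerate step in which $w_i=r$.
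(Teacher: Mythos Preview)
Your argument is correct and rests on the same core idea as the paper: walk a $u$--$v$ path in $G'$ and, for each interior vertex $w_i$, exhibit a $w_{i-1}$--$w_{i+1}$ route in $G$ that avoids $w_i$ by exploiting that plain tree edges provide an ``escape'' from the relevant subtree while cross edges already sit outside it. The packaging differs slightly: the paper does a three-way case split on the pair $(w_{i-1},w_{i+1})$ (both children of $w_i$, one child, neither) and concludes directly that removing any single vertex leaves $u$ and $v$ connected; you instead factor the analysis into a per-edge ``local claim'' (each $G'$-neighbor of $w_i$ stays connected to the root in $G\setminus\{w_i\}$) and then chain the resulting cycles via \cref{lem:cycle} and \cref{lem:common}. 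Your root-isolation remark is a nice touch that makes the non-degeneracy of each inductive step explicit; the paper leaves this implicit.
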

\begin{proof}
  Since $u$ and $v$ are connected in $G'$,
  there exists a path $P$ from $u$ to $v$ only using edges in $G'$.
  Let $P$ be $u=p_0$--$p_1$--...--$p_{k-1}$--$p_k=v$.
  We will show that after removing any vertex $p_i$ where $1\le i<k$ on $P$,
  $p_{i-1}$ and $p_{i+1}$ are still connected,
  meaning that $u$ and $v$ are biconnected.
  We summarize all possible local structures in three cases,
  based on whether $p_{i-1}$ (and $p_{i+1}$) is a child of $p_i$ in $T$.

  Case 1: both $p_{i-1}$ and $p_{i+1}$ are $p_i$'s children.
  Since $p_{i-1}$--$p_i$ is not a \fence{} edge, there must be an edge $x$--$y$ s.t. $x\in T_{p_{i-1}}$ and $y\notin T_{p_i}$.
  Similarly, for $p_i$--$p_{i+1}$, there exists an edge $(x',y')$ s.t. $x'\in T_{P_{i+1}}$ and $y'\notin T_{P_{i}}$.
  Hence, without using $p_i$, $p_{i-1}$ and $p_{i+1}$ are still connected by the path $p_{i-1}\,$\textasciitilde$\,x$--$y\,$\textasciitilde$\,y'$--$x'\,$\textasciitilde$\,p_{i+1}$.
  Here since $y,y'\notin T_{p_{i}}$, $y\,$\textasciitilde$\,y'$ does not contain $p_i$.


  Case 2: one of $p_{i-1}$ and $p_{i+1}$ is $p_i$'s child.
  WLOG, assume $p_{i-1}$ is the child.
  Since $p_{i-1}$--$p_i$ is not a \fence{} edge, there must be an edge $x$--$y$ such that $x\in T_{p_{i-1}}$ and $y\notin T_{p_i}$.
  Also, since $p_{i+1}$ is either the parent of $p_i$ or connected to $p_i$ using a cross edge, $p_{i+1}\notin T_{p_i}$.
  Hence, without using $p_i$, $p_{i-1}$ and $p_{i+1}$ are still connected using the path $p_{i-1}\,$\textasciitilde$\,x$--$y\,$\textasciitilde$\,p_{i+1}$.

  Case 3: neither $p_{i-1}$ nor $p_{i+1}$ is a child of $p_i$,
  and neither of them is in $T_{p_i}$ (otherwise they are connected by a back edge).
  Without using $p_i$, $p_{i-1}$ and $p_{i+1}$ are still connected using the tree path $p_{i-1}\,$\textasciitilde$\,p_{i+1}$.

  Since removing any vertex on the path $P$ does not disconnect the path, all vertices in the same CC of the \resgraph{} are biconnected.
\end{proof}

\begin{lemma}\label{lem:cctobcc2}
  If \cref{line:component-head} in \cref{alg:bcc} assigns $h$ as the component head of a connected component (CC) $C$ in the \resgraph{} $G'$, then $h$ is biconnected with $C$.
\end{lemma}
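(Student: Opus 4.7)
The plan is to extend the rerouting argument of \cref{lem:cctobcc1} by handling the ``prepended'' vertex $h$. Concretely, I would show that for every $v \in C$, removing any vertex $w \notin \{h,v\}$ leaves $h$ and $v$ connected in $G$, which places $h$ in the same BCC as $v$. First I would identify the witness fence edge: since \cref{line:component-head} wrote $h$ as the component head of $C$, there is some $u \in C$ with $p(u) = h$ and $l[u] \ne l[h]$. Because plain tree edges lie in $G'$ and would equalise the labels of their endpoints, $u$--$h$ must be a fence edge, and I would record the key consequence that every edge with one endpoint in $T_u$ has its other endpoint in $T_h$.

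Next, for any $v \in C$, I would fix a simple path $u = p_0, p_1, \ldots, p_k = v$ in $G'$ (guaranteed by connectivity of $C$ in $G'$) and form the walk $Q = h, u, p_1, \ldots, v$. For $w$ off $Q$, the walk $Q$ already connects $h$ and $v$. For $w = p_i$ with $1 \le i \le k-1$, the local $p_{i-1}$-to-$p_{i+1}$ reroute constructed in \cref{lem:cctobcc1} can be spliced into $Q$ to bypass $p_i$. The genuinely new case is $w = u$, where I must connect $h$ to $p_1$ while avoiding $u$; I would handle it by a case split on the type of the edge $(u, p_1) \in G'$. If $(u, p_1)$ is a cross edge, then $p_1$ is neither ancestor nor descendant of $u$, and the fence property of $u$--$h$ forces $p_1 \in T_h \setminus T_u$, so the tree path $h\treepath p_1$ (staying inside $T_h$) never touches $u$. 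If $(u, p_1)$ is a plain tree edge, then $p_1$ must be a child of $u$ (it cannot be $h$, else $(u,h)$ would be plain, not fence), and non-fenceness of $(u, p_1)$ yields an edge $x$--$y$ with $x \in T_{p_1}$ and $y \notin T_u$; fence-ness of $u$--$h$ pins $y$ into $T_h$, so the concatenation $h\treepath y$--$x\treepath p_1$, with its two tree segments confined to $T_h$ and $T_{p_1}$ respectively, avoids $u$.

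The main obstacle is precisely the $w = u$ subcase: we must produce an alternative $h$-to-$p_1$ connection that truly sidesteps $u$, and this rests on the tree-topological consequence that fence-ness of $u$--$h$ traps every escape from $T_u$ inside $T_h$. Once that observation is in place, tree paths lying inside $T_h \setminus \{u\}$ or $T_{p_1}$ structurally cannot pass through $u$, and one can always stitch them to a witnessing escape edge of $T_{p_1}$ (or use them directly in the cross-edge case). With the $u$-hurdle cleared, the remainder of the proof is a direct transcription of the reroute machinery of \cref{lem:cctobcc1}, concluding that $h$ is biconnected with every $v \in C$.
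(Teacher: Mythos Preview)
Your proof is correct but follows a different route from the paper's. The paper's argument is shorter: writing $c$ for your $u$, it splits on whether $C\subseteq T_c$. If not, some cross edge $x$--$y$ with $x\in T_c$ and $y\in T_h\setminus T_c$ exists, and the cycle $h\treepath x$--$y\treepath h$ together with \cref{lem:cycle} places $h$ and $c$ in the same BCC. If $C\subseteq T_c$, either $C=\{c\}$ (trivial) or some $x\in C$ has a back edge whose other endpoint the fence property forces to be $h$ itself, again giving a cycle through $h$ and $c$. Having shown $h$ is biconnected with $c$, the paper then invokes \cref{lem:cctobcc1} (all of $C$ lies in one BCC) and \cref{lem:common} to extend this to every vertex of $C$. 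Your approach instead re-runs the local-reroute machinery of \cref{lem:cctobcc1} directly, adding one new case for the removal of the fence endpoint $u$; this sidesteps both \cref{lem:cycle} and \cref{lem:common} and is therefore more self-contained, but at the price of a longer case analysis. The paper's version trades that labour for a brief cycle argument plus a transitivity step through the already-proved lemmas.
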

\begin{proof}
  First of all, assume $h$ is assigned as the component head because of its child $c$, where $h$--$c$ is a \fence{} edge.
  We will show that the connected component $C$ in $G'$ containing $c$ is biconnected with $h$.
  There are two cases.

  Case 1: $C$ only contains vertices in $T_c$.
  This means that no vertices in $T_c$ have a cross edge to another vertex outside $T_c$.
  Therefore, either all edges incident on $c'\in T_c$ do not escape from $T_c$,
  or some node $c'\in T_c$ is connected to nodes outside $T_c$ via back edges.
  In the former case, all the edges connecting $c$ and its children are \fence{} edges,
  and thus $C$ only contains $c$. 
  In this case, $h$ is trivially biconnected with $C$.
  In the latter case, assume $x\in T_c\cap C$ has a back edge connected to $y\notin T_c$.
  Note that $y$ can only be $h$---if $y$ is $h$'s ancestor, then edge $x$--$y$ escapes $T_h$, so $h$--$c$ is a \plain{} tree edge (contradiction).
  Therefore, we can find a cycle $h$--$c\,$\textasciitilde$\,x$--$h$.
  From \cref{lem:cycle}, $h,c,x$ are biconnected, and $h$ is in the same BCC as $c$ and $x$,
  and thus all vertices in $C$ (\cref{lem:cctobcc1,lem:common}).

  Case 2: $C$ contains both vertices in $T_c$ and some vertices in $T_h\setminus T_c$.
  Hence, there exists a cross edge $x$--$y$, where $x\in T_c$ and $y\notin T_c$.
  We can find a cycle $h,$\textasciitilde$\,x$--$y\,$\textasciitilde$\,h$.
  From \cref{lem:cycle}, $h,c,u$ are biconnected. $h$ is in the same BCC as $c$ and $u$.
\end{proof}

Combining \cref{lem:cctobcc1,lem:cctobcc2} proves \cref{thm:algo-from-bcc}.

\cref{thm:algo-to-bcc} shows that if two vertices are put in the same BCC by \cref{alg:bcc}, they are biconnected in $G$.
\cref{thm:algo-from-bcc} indicates that two vertices biconnected in $G$ will be put in the same BCC by \cref{alg:bcc}.
\cref{lem:function} indications back edges and \fence{} edges are identified correctly by \cref{alg:bcc}.
Combining them together indicates that \cref{alg:bcc} is correct.

\hide{
\begin{figure}[t]
    \includegraphics[width=\columnwidth]{figures/BCC-proof-new.pdf}
    \caption{\small An example for different cases discussed in the proof.  Dotted arrows are critical edges.  Blue straight lines are non-critical tree edges.  Orange curve lines are cross edges.  The entire graph is biconnected, and we want to show that all non-root vertices are connected via non-critical tree edges and cross edges. \label{fig:BCC-proof}}
\end{figure}
}

\subsection{Cost Bounds for the \Ouralgo{} Algorithm}\label{sec:cost}

We now analyze the cost bounds of the algorithm.
\begin{theorem}\label{thm:bcccost}
    \cref{alg:bcc} computes the BCCs of a graph $G$ with $n$ vertices and $m$ edges using $O(n+m)$ expected work, $O(\log^3 n)$ span \whp{}, and $O(n)$ auxiliary space (other than the input).
\end{theorem}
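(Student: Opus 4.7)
The plan is to bound work, span, and auxiliary space step by step, following the four phases of \cref{alg:bcc}, and then add them up. For each phase I would invoke a known primitive with its stated bounds, making sure that (i) none of the primitives explicitly materializes the \sketch{} $G'$, and (ii) each auxiliary array has size $O(n)$.

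First, for \firstcc{} (\cref{line:firstcc}) I would cite a work-efficient parallel connectivity algorithm (e.g., the randomized linear-work connectivity of Shun--Dhulipala--Blelloch) that runs in $O(n+m)$ expected work and $O(\log^3 n)$ span \whp{} using $O(n)$ auxiliary space to output the spanning forest $F$. For \rootst{} (\cref{line:ett}), I would invoke the ETT bound stated in \cref{sec:prelim}: $O(n)$ expected work and $O(\log n)$ span \whp{} on a tree of $n-1$ edges, with $O(n)$ space for the Euler tour representation, list-ranking scratch arrays, and the resulting parent pointers. After this step, every edge can be classified as tree/non-tree (and, for tree edges, oriented parent-to-child) in $O(1)$ work per edge by a constant number of lookups into the $\first,\last$ arrays.

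For \lowhigh{} (\cref{line:low-high}), the arrays $w_1[u]$ and $w_2[u]$ are per-vertex minima/maxima over the edges incident on $u$, which reduce to an edge-parallel semisort or reduce-by-key on $O(m)$ elements, giving $O(n+m)$ expected work and $O(\log n)$ span \whp. Converting $w_1,w_2$ into the subtree-min $\low[v]$ and subtree-max $\high[v]$ reduces to range queries on the Euler tour (since each subtree is a contiguous interval), which I would implement via a parallel RMQ/prefix-style scan in $O(n)$ work and $O(\log n)$ span, using $O(n)$ tag arrays. For \lastcc{} (\cref{line:lastcc}--\ref{line:component-head}), the CC call is again the same linear-work connectivity algorithm, but now running over $G$ with an edge filter \insketch{} that is evaluated in $O(1)$ per access using the tags from Step 3; because the filter is computed on the fly, no explicit copy of $G'$ is made, so the extra space is again $O(n)$ (the CC labels plus scheduler state). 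The final component-head assignment is a parallel-for over the $n-1$ tree edges, which costs $O(n)$ work and $O(1)$ span.

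Summing across the four phases, the work is $O(n+m)$ expected, dominated by the two connectivity calls and the tag computation; the auxiliary space is $O(n)$ since each of $F$, $p[\cdot]$, $\first,\last,w_1,w_2,\low,\high$, the label array $l[\cdot]$, and the head map is of size $O(n)$, and $G'$ is never materialized. The span is $O(\log^3 n)$ \whp, inherited from the connectivity subroutine in \firstcc{} and \lastcc{}, with all other phases contributing only $O(\log n)$. The main obstacle I anticipate is the span argument for \lastcc{}: I must argue that running connectivity on $G$ with the \insketch{} filter does not inflate the span, which follows because the filter evaluates in $O(1)$ and the underlying connectivity algorithm's span bound holds for any input graph regardless of edge density; the remaining bookkeeping is routine.
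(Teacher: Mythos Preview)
Your argument is correct and mirrors the paper's own proof almost exactly: both decompose the cost across the same four phases, invoke the same primitives (a linear-work connectivity algorithm, ETT, and RMQ over the Euler-tour interval of each subtree), and obtain the $O(\log^3 n)$ span from the two connectivity calls while noting that the \insketch{} filter is $O(1)$ per edge so $G'$ is never materialized. The only slip is that a parallel-for over $\Theta(n)$ items has $O(\log n)$ span in the binary-forking model, not $O(1)$, but this is immaterial to the final bound.
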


\begin{proof}
    The first and last steps compute the graph connectivity twice.
    Graph connectivity can be computed in $O(n+m)$ expected work and $O(\log^3 n)$ span \whp{}~\cite{SDB14}.
    In Step 2, ETT can be performed $O(n)$ expected work and $O(\log n)$ span \whp{} (see \cref{sec:prelim}).
    In Step 3, computing $\low[\cdot]$ and $\high[\cdot]$ arrays based on RMQ takes $O(m)$ work and $O(\log n)$ span~\cite{blelloch2020optimal}.
    Adding all pieces together gives the work and span bounds.

    For the space, all arrays for the tags have size $O(n)$.
    As mentioned, we do not generate the \resgraph{} explicitly.
    In the last step, we try all the edges in $G$ but skipping the back and \fence{} edges.
    In all, the auxiliary space needed is $O(n)$.
  \end{proof}

\section{Implementation Details}\label{sec:implementation}
We discuss some implementation details of \ouralgo{} in this section.

\myparagraph{Connectivity.}
\label{sec:imp:cc}
Connectivity is used twice in \ouralgo{}.
The only existing parallel CC implementation with good theoretical guarantee we know of is the SDB algorithm~\cite{SDB14}
(an initial version of \gbbs{} is based on this algorithm).
A recent paper by Dhulipala et al.~\cite{dhulipala2020connectit} gave 232 parallel CC implementations, many of which outperformed the SDB algorithm,
but no analysis of work-efficiency was given.
A more recent version of \gbbs{} uses the \texttt{UF-Async} algorithm in~\cite{dhulipala2020connectit} to compute CC.
To achieve efficiency both in theory and in practice, \ouralgo uses the \defn{LDD-UF-JTB} algorithm from~\cite{dhulipala2020connectit} and we provide a new analysis for this algorithm to prove its theoretical efficiency.

LDD-UF-JTB consists of two steps.
It first runs a {low-diameter decomposition} (LDD) algorithm~\cite{miller2013parallel} to find a decomposition (partition of vertices) of the graph
such that each component has a low diameter and the number of edges crossing different components is bounded.
The second step is to use a {union-find structure} by Jayanti et al.~\cite{jayanti2019randomized} to union components connected by cross-component edges.
We now show the bounds of this algorithm.

\begin{theorem}\label{thm:connect}
    The LDD-UF-JTB algorithm computes the CCs of a graph $G$ with $n$ vertices and $m$ edges using $O(n+m)$ expected work and $O(\log^3 n)$ span \whp.
\end{theorem}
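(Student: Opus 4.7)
The plan is to analyze the two phases of LDD-UF-JTB separately, tune the LDD parameter to balance the work between them, and then add the resulting bounds. Throughout, I will treat LDD's randomness and the union-find's internal randomness as independent so that linearity of expectation applies directly.

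First I would invoke Miller-Peng-Xu LDD with a parameter $\beta$ to be fixed. The MPX guarantee is that every vertex lands in a cluster of (strong) diameter $O(\log n / \beta)$ whp, and every edge is an inter-cluster edge with probability at most $\beta$; in particular, the expected number of inter-cluster edges is at most $\beta m$. The parallel MPX implementation costs $O(n+m)$ expected work and admits a span of $O((\log^2 n)/\beta)$ whp, where the extra $\log n$ factor comes from the radius times the per-round scans required to expand each cluster. I would choose $\beta = \Theta(1/\log n)$, so that clusters have diameter $O(\log^2 n)$, the expected number of inter-cluster edges is $O(m/\log n)$, and the LDD span is $O(\log^3 n)$ whp.

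Next, for the union-find phase, I would iterate over all $m$ edges in parallel, and in $O(1)$ work and span per edge compare the cluster IDs of its two endpoints (obtained from LDD). For each edge whose endpoints are in different clusters, I would perform a concurrent \textsc{Union} on the two cluster IDs using the Jayanti--Tarjan--Boix-Adser\`a structure, which supports each \textsc{Union}/\textsc{Find} in $O(\log n)$ expected amortized work and $O(\log n)$ span whp even under contention. By linearity of expectation over the randomness of LDD and of the union-find, the expected total work of this phase is $O(m) + O((m/\log n)\cdot \log n) = O(m)$, and its span is dominated by $O(\log n)$ per operation plus contention factors, which are absorbed into the $O(\log^3 n)$ bound from the first phase. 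Finally, to output labels I would do one pass over the vertices invoking \textsc{Find} on each, adding $O(n \log n / \log n) = O(n)$ amortized work (or handled via a parallel \textsc{Find} step with $O(\log n)$ span).

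Adding the two phases gives $O(n+m)$ expected total work and $O(\log^3 n)$ span whp, establishing the theorem. The main obstacle I anticipate is the interplay of the two randomness sources in the second phase: I need the expected work to stay $O(m)$ even though the number of issued union operations is itself a random variable, which is why I rely on MPX's per-edge cut probability bound of $\beta$ and the independence of JTB's internal coin flips from the LDD sample. A secondary subtlety is bounding the span of many concurrent union-find operations under contention; I would appeal directly to the JTB analysis, which already tolerates $n$ simultaneous operations with only a polylogarithmic span blowup, so the final span is controlled by the $O(\log^3 n)$ term from LDD rather than by the union-find step.
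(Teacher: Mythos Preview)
Your proposal is correct and follows essentially the same argument as the paper: set the LDD parameter to $\beta=\Theta(1/\log n)$ so that MPX gives $O(n+m)$ work and $O((\log^2 n)/\beta)=O(\log^3 n)$ span, then feed the $O(\beta m)=O(m/\log n)$ inter-cluster edges into the Jayanti--Tarjan--Boix-Adser\`a union-find at $O(\log n)$ work each for $O(m)$ total, with the union-find span absorbed by the LDD span. The only difference is cosmetic: the paper quotes the JTB bound in its original PRAM form before specializing to the fork-join setting, and it does not spell out the final \textsc{Find} pass or the independence-of-randomness remark you add; your accounting $O(n\log n/\log n)=O(n)$ for that last pass is muddled as written, but the paper's proof elides this step entirely, so you are not diverging from it.
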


Therefore, using LDD-UF-JTB for CC preserves the cost bounds in \cref{thm:bcccost}. \ifconference{We prove \cref{thm:connect} in the full version of this paper.}\iffullversion{We prove \cref{thm:connect} in \cref{app:cc}.}

We optimized LDD-UF-JTB using the hash bag and local search techniques proposed from~\cite{scc2022}.
These optimizations are only used in computing CCs in our algorithm,
and we do not claim them as contributions of this paper.
In our tests, using these optimizations improves the performance of \ouralgo{} by 1.5$\times$ on average (up to 5$\times$).
Some results are shown \ifconference{in the full version of this paper.}\iffullversion{in \cref{fig:BCC_Break_loc}.}We note that among all 232 CC algorithms in~\cite{dhulipala2020connectit}, no one is constantly faster, and the relative performance is decided by the input graph properties.
In \ouralgo, we currently use the same CC algorithm for all graphs, and we acknowledge that using the fastest CC algorithm on each graph can further improve the performance of \ouralgo. We choose LDD-UF-JTB mainly because it is theoretically-efficient, and also can generate CC as a by-product efficiently.

\hide{The LDD algorithm runs in rounds.
If the frontier (vertices being processed in one round) size is small, to gain parallelism,
the local search technique allows the algorithm to explore multi-hop neighbors instead of one-hop neighbors of the frontier.
This can be viewed as ``vertical'' granularity control, which is similar to setting a sequential base case size in some divide-and-conquer algorithms. \xiaojun{Give citations?}
The hash bag uses dynamic resizing to avoid the redundant work in maintaining frontier.
More details are given in the supplementary materials.
}

\myparagraph{Spanning Forest.}
The spanning forest of $G$ is obtained as a by-product of Step 1, which saves all edges to form the CCs.
We then re-order the vertices in the compressed sparse row (CSR) format to let each CC be contiguous.

\myparagraph{Euler Tour Technique (ETT).}
We use the standard ETT to root the spanning trees (see \cref{sec:prelim}).
We replicate each undirected edge in $T$ into two directed edges and semisort them~\cite{gu2015top}, so edges with the same first endpoint are contiguous.
Then we construct a circular linked list as the Euler circuit.
Assume a vertex $v$ has $k$ in-coming neighbors $u_1$, $u_2$, $\cdots$, $u_k$.
For every incoming edge of $v$ except for the last one,
we link it to its next outgoing edge
(i.e., $u_i$--$v$ is linked to $v$--$u_{i+1}$ for $1 \leq i < k$).
For the last incoming edge, we link it to the first outgoing edge of $v$
(i.e., $u_k$--$v$ is linked to $v$--$u_1$).

After we obtain the Euler circuit of the tree, we flatten the linked list to an array by list ranking, and acquire the Euler tour order of each vertex.
For list ranking, we coarsen the base cases by sampling $\sqrt{n}$ nodes.
We start from these nodes in parallel, with each node sequentially following the pointers until it visits the next sample.
Then we compute the offsets of each sample by prefix sum, pass the offsets to other nodes by chasing the pointers from the samples, and scatter all nodes into a contiguous array.

\myparagraph{Computing Tags.}\label{app:bcc:compute}
We use several tags $w_1$, $w_2$, $\first$, $\last$, $\low$, and $\high$ for each vertex, defined the same as Tarjan-Vishkin~\cite{tarjan1985efficient}  (see~\cref{sec:previous-bcc}).
{
We use \CAS{} operations to compute $\first$ and $\last$ as they represent the first and last appearances of a vertex in the Euler tour order.
For each tree edge $(u, v)$, if $\first[u] < \first[v]$, we set $p(v)=u$, or vice versa.
}
Computing $\low$ and $\high$ are similar, so we only discuss $\low$ here.
We first initialize $w_1[v]$ with $\first[v]$ for each $v \in V$.
Then it traverses all non-tree edges $u$--$v$ and updates $w_1[u]$ and $w_1[v]$ with the minimum of $\first[u]$ and $\first[v]$.
We build a parallel sparse table~\cite{blelloch2020optimal} on $w_1$ to support range minimum queries.
Note that $\first[v]$ and $\last[v]$ reflect the range of $v$'s subtree in the Euler tour order.
Thus, $\low[v]$ can be computed by finding the minimum element in $w_1[\cdot]$ in the range between $\first[v]$ and $\last[v]$.
$\high[\cdot]$ can be computed similarly.

\begin{table*}[htbp]
    \small
    \centering
    \setlength\tabcolsep{2.5pt}
    \begin{tabular}{@{}c@{ }r|rr@{ }r@{ }r@{}r|ccc|ccc|cc|c|l@{}}
        &       & \multicolumn{1}{c}{\multirow{2}[1]{*}{\textbf{$\boldsymbol{n}$}}} & \multicolumn{1}{c}{\multirow{2}[1]{*}{\textbf{$\boldsymbol{m}$}}} & \multicolumn{1}{c}{\multirow{2}[1]{*}{\textbf{$\boldsymbol{D}$}}} & \multicolumn{1}{c}{\multirow{2}[1]{*}{\textbf{\textbf{\#BCC}}}} & \multicolumn{1}{@{}c@{ }|}{\multirow{2}[1]{*}{\textbf{$\boldsymbol{|\textbf{BCC}_1|\%}$}}} & \multicolumn{3}{c|}{\textbf{Ours}} & \multicolumn{3}{c|}{\textbf{GBBS}} & \textbf{SM'} & \multirow{2}[1]{*}{\textbf{SEQ}} & \cellcolor[rgb]{ 1,  1,  0} \textbf{$\boldsymbol{T_{\mathit{best}}}$} & \multicolumn{1}{c}{\multirow{2}[1]{*}{\textbf{Notes}}} \\
        &       &       &       &       &       &       & \textbf{par.} & \textbf{seq.} & \textbf{spd.} & \textbf{par.} & \textbf{seq.} & \textbf{spd.} & \textbf{14} &       & \cellcolor[rgb]{ 1,  1,  0} \textbf{/ours} &  \\
    \midrule
    \multirow{5}[2]{*}{\begin{sideways}\textbf{Social}\end{sideways}} & \textbf{YT} & 1.13M & 5.98M & 23    & 673,661 & 39.83\% & \underline{0.030} & 0.465 & 15.6  & 0.040 & 0.435 & 10.8  & 0.059 & 0.175 & \cellcolor[rgb]{ 1,  1,  0} 1.35 & com-youtube~\cite{yang2015defining} \\
        & \textbf{OK} & 3.07M & 234M  & 9     & 68,117 & 97.76\% & \underline{0.103} & 3.08  & 30.0  & 0.158 & 4.86  & 30.8  & 0.297 & 3.14  & \cellcolor[rgb]{ 1,  1,  0} 1.53 & com-orkut~\cite{yang2015defining} \\
        & \textbf{LJ} & 4.85M & 85.7M & 19    & 1,133,883 & 75.61\% & \underline{0.104} & 3.02  & 28.9  & 0.159 & 3.34  & 21.0  & n     & 1.87  & \cellcolor[rgb]{ 1,  1,  0} 1.52 & soc-LiveJournal1~\cite{backstrom2006group} \\
        & \textbf{TW} & 41.7M & 2.41B & 23    & 1,936,001 & 95.33\% & \underline{1.44} & 52.9  & 36.7  & 2.83  & 95.2  & 33.7  & 20.5$^*$  & 49.2  & \cellcolor[rgb]{ 1,  1,  0} 1.96 & Twitter~\cite{kwak2010twitter} \\
        & \textbf{FT} & 65.6M & 3.61B & 37    & 14,039,045 & 78.50\% & \underline{3.10} & 129   & 41.6  & 6.44  & 260   & 40.5  & 10.9  & 122   & \cellcolor[rgb]{ 1,  1,  0} 2.07 & Friendster~\cite{yang2015defining} \\
    \midrule
    \multirow{5}[2]{*}{\begin{sideways}\textbf{Web}\end{sideways}} & \textbf{GG} & 876K  & 8.64M & 24    & 175,274 & 73.31\% & \underline{0.029} & 0.534 & 18.7  & 0.045 & 0.530 & 11.8  & n     & 0.255 & \cellcolor[rgb]{ 1,  1,  0} 1.58 & web-Google~\cite{leskovec2009community} \\
        & \textbf{SD} & 89.2M & 3.88B & 35    & 16,189,065 & 80.36\% & \underline{3.11} & 134   & 43.2  & 5.61  & 213   & 38.0  & n     & 92.3  & \cellcolor[rgb]{ 1,  1,  0} 1.81 & sd\_arc~\cite{webgraph} \\
        & \textbf{CW} & 978M  & 74.7B & 254   & 81,809,602 & 86.48\% & \underline{22.9} & 1464  & 64.0  & 39.7  & 1526  & 38.4  & n     & 695   & \cellcolor[rgb]{ 1,  1,  0} 1.73 & ClueWeb~\cite{webgraph} \\
        & \textbf{HL14} & 1.72B & 124B  & 366   & 124,406,075 & 83.25\% & \underline{31.1} & 2057  & 66.0  & 50.7  & 2113  & 41.7  & n     & 1011  & \cellcolor[rgb]{ 1,  1,  0} 1.63 & Hyperlink14~\cite{webgraph} \\
        & \textbf{HL12} & 3.56B & 226B  & 650   & 410,853,262 & 80.63\% & \underline{89.1} & 5435  & 61.0  & 104   & 5985  & 57.6  & n     & 3027  & \cellcolor[rgb]{ 1,  1,  0} 1.17 & Hyperlink12~\cite{webgraph} \\
    \midrule
    \multirow{3}[2]{*}{\begin{sideways}\textbf{Road}\end{sideways}} & \textbf{CA} & 1.97M & 5.53M & 857   & 381,366 & 79.55\% & \underline{0.040} & 0.824 & 20.6  & \textcolor[rgb]{ 1,  0,  0}{0.372} & 1.05  & 2.82  & n     & 0.206 & \cellcolor[rgb]{ 1,  1,  0} 5.15 & roadnet-CA~\cite{leskovec2009community} \\
        & \textbf{USA} & 23.9M & 57.7M & 8,263 & 7,390,330 & 66.90\% & \underline{0.336} & 12.1  & 36.0  & \textcolor[rgb]{ 1,  0,  0}{4.64} & 15.1  & 3.25  & \textcolor[rgb]{ 1,  0,  0}{3.73$^*$} & 2.25  & \cellcolor[rgb]{ 1,  1,  0} 6.69 & RoadUSA~\cite{roadgraph} \\
        & \textbf{GE} & 12.3M & 32.3M & 2,240 & 2,482,488 & 78.67\% & \underline{0.267} & 11.1  & 41.6  & 2.02  & 11.4  & 5.66  & 1.14$^*$ & 2.88  & \cellcolor[rgb]{ 1,  1,  0} 7.54 & Germany~\cite{roadgraph} \\
    \midrule
    \multirow{8}[2]{*}{\begin{sideways}\boldmath{}\textbf{$\boldsymbol{k}$-NN}\unboldmath{}\end{sideways}} & \textbf{HH5} & 2.05M & 13.0M & 1,859 & 17,408 & 62.55\% & \underline{0.073} & 1.60  & 22.0  & 0.447 & 1.52  & 3.41  & n     & 0.509 & \cellcolor[rgb]{ 1,  1,  0} 6.16 & Household~\cite{uciml,wang2021geograph}, $k$=5 \\
        & \textbf{CH5} & 4.21M & 29.7M & 14,479 & 299   & 15.41\% & \underline{0.128} & 2.85  & 22.2  & \textcolor[rgb]{ 1,  0,  0}{1.44} & 2.38  & 1.66  & n     & 0.528 & \cellcolor[rgb]{ 1,  1,  0} 4.11 & CHEM~\cite{fonollosa2015reservoir,wang2021geograph}, $k$=5 \\
        & \textbf{GL2} & 24.9M & 65.4M & 13,333 & 10,940,922 & 0.03\% & \underline{0.402} & 13.8  & 34.5  & 1.53  & 16.9  & 11.0  & n     & 2.51  & \cellcolor[rgb]{ 1,  1,  0} 3.80 & GeoLife~\cite{geolife,wang2021geograph}, $k$=2 \\
        & \textbf{GL5} & 24.9M & 157M  & 21,600 & 1,009,434 & 30.07\% & \underline{0.472} & 19.1  & 40.5  & 2.80  & 19.4  & 6.92  & n     & 4.03  & \cellcolor[rgb]{ 1,  1,  0} 5.93 & GeoLife~\cite{geolife,wang2021geograph}, $k$=5 \\
        & \textbf{GL10} & 24.9M & 305M  & 3,824 & 51,465 & 86.38\% & \underline{0.668} & 29.2  & 43.8  & 1.64  & 23.5  & 14.3  & n     & 7.07  & \cellcolor[rgb]{ 1,  1,  0} 2.46 & GeoLife~\cite{geolife,wang2021geograph}, $k$=10 \\
        & \textbf{GL15} & 24.9M & 453M  & 3,664 & 23,149 & 91.11\% & \underline{0.751} & 34.4  & 45.8  & 1.51  & 25.9  & 17.1  & n     & 8.92  & \cellcolor[rgb]{ 1,  1,  0} 2.01 & GeoLife~\cite{geolife,wang2021geograph}, $k$=15 \\
        & \textbf{GL20} & 24.9M & 602M  & 2,805 & 13,619 & 93.96\% & \underline{0.861} & 39.2  & 45.6  & 1.48  & 28.6  & 19.3  & n     & 10.2  & \cellcolor[rgb]{ 1,  1,  0} 1.72 & GeoLife~\cite{geolife,wang2021geograph}, $k$=20 \\
        & \textbf{COS5} & 321M  & 1.96B & 1,180 & 85,283 & 99.74\% & \underline{8.46} & 382   & 45.2  & 17.5  & 392   & 22.4  & n     & 120   & \cellcolor[rgb]{ 1,  1,  0} 2.07 & Cosmo50~\cite{cosmo50,wang2021geograph}, $k$=5 \\
    \midrule
    \multirow{6}[2]{*}{\begin{sideways}\textbf{Synthetic}\end{sideways}} & \textbf{SQR} & 100M  & 400M  & 10,000 & 1     & 100.00\% & \underline{1.32} & 43.4  & 32.9  & 15.4  & 44.2  & 2.87  & 20.3$^*$ & 24.4  & \cellcolor[rgb]{ 1,  1,  0} 11.7 & 2D grid $10^4 \times 10^4$ \\
        & \textbf{REC} & 100M  & 240M  & 50,500 & 1     & 100.00\% & \underline{1.35} & 43.6  & 32.4  & \textcolor[rgb]{ 1,  0,  0}{47.0} & 34.6  & 0.735 & 13.1$^*$ & 16.8  & \cellcolor[rgb]{ 1,  1,  0} 12.5 & 2D grid $10^3 \times 10^5$ \\
        & \textbf{SQR'} & 100M  & 400M  & 10,256 & 23,836,580 & 70.65\% & \underline{1.31} & 50.1  & 38.1  & \textcolor[rgb]{ 1,  0,  0}{12.5} & 60.9  & 4.88  & n & 10.6  & \cellcolor[rgb]{ 1,  1,  0} 8.06 & sampled SQR \\
        & \textbf{REC'} & 100M  & 240M  & 69,014 & 23,826,514 & 70.66\% & \underline{1.37} & 46.8  & 34.3  & \textcolor[rgb]{ 1,  0,  0}{22.4} & 58.9  & 2.63  & n & 10.7  & \cellcolor[rgb]{ 1,  1,  0} 7.81 & sampled REC \\
        & \textbf{Chn7} & 10M   & 20M   & $10^7-1$ & $10^7-1$ & 0.00\% & \underline{0.278} & 13.1  & 46.9  & \textcolor[rgb]{ 1,  0,  0}{81.6} & 19.7  & 0.241 & \textcolor[rgb]{ 1,  0,  0}{40.5$^*$} & 3.33  & \cellcolor[rgb]{ 1,  1,  0} 12.0 & Chain of size $10^7$ \\
        & \textbf{Chn8} & 100M  & 200M  & $10^8-1$ & $10^8-1$ & 0.00\% & \underline{3.25} & 152   & 46.9  & \textcolor[rgb]{ 1,  0,  0}{957} & 307   & 0.320 & \textcolor[rgb]{ 1,  0,  0}{703$^*$} & 38.9  & \cellcolor[rgb]{ 1,  1,  0} 12.0 & Chain of size $10^8$ \\
    \bottomrule
    \end{tabular}%
    \caption{\small\textbf{Graph information, running times (in seconds), and speedups.}\label{tab:bcc}
    $T_{\mathit{best}}/\mathit{ours}$ (highlighted in yellow) is the \textbf{fastest time of the other implementations $/$ our time, both using all cores}.
    ``$n$'' $=$ number of vertices.
    ``$m$'' $=$ number of edges.
    ``$D$'' $=$ approximate diameter.
    ``\#BCC'' $=$ number of BCCs.
    ``$|\text{BCC}_1|\%$'' $=$ percentage of the largest BCCs.
    ``\gbbs{}'' $=$ \gbbs{}'s implementation~\cite{gbbs2021}.
    ``\hipc{}'' $=$ Slota and Madduri's algorithm~\cite{slota2014simple} (the faster of the two proposed algorithms). Since \hipc has scalability issues (see \cref{fig:scalability}), we report the 16-core time if it is faster, and denote as ($^*$).
    ``\seq'' $=$ Hopcroft-Tarjan BCC algorithm~\cite{hopcroft1973algorithm}.
    Details about the baselines are introduced in \cref{sec:exp}.
    The fastest runtime for each graph is underlined.
    Red numbers are parallel runtime \emph{slower than the sequential algorithm}.
    ``par.'' $=$ parallel running time (on 192 hyper-threads).
    ``seq.'' $=$ sequential running time (on 1 thread).
    ``spd.'' $=$ self-relative speedup.
    ``n'' $=$ no support, because \hipc{} only works on connected graphs.
    }
    \vspace{-1em}
\end{table*}%
\newcommand{\largediam}{large-diameter}
\newcommand{\lowdiam}{low-diameter}
\section{Experiments}
\label{sec:exp}

\myparagraph{Setup.}
We run our experiments on a 96-core (192 hyperthreads) machine with four Intel Xeon Gold 6252 CPUs, and 1.5 TB of main memory.
We implemented all algorithms in C++ using ParlayLib~\cite{blelloch2020parlaylib} for fork-join parallelism and some parallel primitives (e.g., sorting).
We use \texttt{numactl -i all} in experiments with
more than one thread to spread the memory pages across CPUs in a round-robin fashion.
We run each test for 10 times and report the median.

We tested on 27 graphs, including social networks, web graphs, road graphs, \knn{} graphs, and synthetic graphs.
The information of the graphs is given in \cref{tab:bcc}.
In addition to commonly-used benchmarks of social, web and, road graphs, we also
use \knn{} graphs and synthetic graphs.
\knn{} graphs are widely used in machine learning algorithms (see discussions in~\cite{wang2021geograph}).
In \knn{} graphs, each vertex is a multi-dimensional data point and has $k$ edges pointing to its $k$-nearest neighbors (excluding itself).
We also create six synthetic graphs, including two grids (SQR and REC),
two sampled grids (SQR' and REC', each edge is created with probability 0.6),
and two chains (Chn7 and Chn8).
SQR and SQR' have sizes $10^4\times 10^4$.
REC and REC' have sizes $10^3 \times 10^5$.
Each row and column in grid graphs are circular.
Chn7 and Chn8 have sizes $10^7$ and $10^8$. 
The tested graphs cover a wide range of sizes and edge distributions.

For directed graphs, we symmetrize them to test \BCC{}.
We call the social and web graphs \emph{\lowdiam{} graphs} as they have diameters mostly within a few hundreds. We call the road, \knn{}, and synthetic graphs \emph{\largediam{} graphs} as their diameters are mostly more than a thousand.
When comparing the \emph{average} running times across multiple graphs, we always take the \defn{geometric mean} of the numbers.

\myparagraph{Baseline Algorithms.}
We call all existing algorithms that we compare to the \defn{baselines}.
We implement sequential Hopcroft-Tarjan~\cite{hopcroft1973algorithm} algorithm for comparison,
referred to as \seq{}.
We compare the number of BCCs reported by each algorithm with \seq{} to verify correctness.

We also compare to two most recent available BCC implementations \gbbs{}~\cite{gbbs2021}, and Slota and Madduri~\cite{slota2014simple}.
We use \hipc{} to denote the \emph{better} of the two BCC algorithms in Slota and Madduri~\cite{slota2014simple}.
On many graphs, we observe that \hipc{} is faster on 16 threads than using all 192 threads,
in which case we report the lower time of 16 and 192 threads.
Through correspondence with the authors,
we understand that \hipc{} requires the input graph to be connected,
so we only report the running time when it gives the correct answers.
As few graphs we tested are entirely connected, we focus on comparisons with \gbbs{} and \seq{}.
We also compare our breakdown and sequential running times with \gbbs{} since \gbbs{} can process most of the tested graphs\footnote{\gbbs{} updated a new version after this paper was accepted, so we also updated the numbers using their latest version (Nov. 2022). Some new features in the latest version greatly improved their BCC performance. }.

Unfortunately, we cannot find existing implementations for Tarjan-Vishkin to compare with.
We are aware of two papers that implemented Tarjan-Vishkin~\cite{edwards2012better,cong2005experimental}.
Edwards and Vishkin's implementation~\cite{edwards2012better} is on the XMT architecture and they did not release their code.
Cong and Bader's code~\cite{cong2005experimental} is released, but it was written in 2005 and uses some system functions that are no longer supported on our machine.
For a full comparison, we implemented a faithful Tarjan-Vishkin from the original paper~\cite{tarjan1985efficient}.
\ifconference{As engineering Tarjan-Vishkin is not the main focus of this paper, we provide the details in the full paper.
}
\iffullversion{As engineering Tarjan-Vishkin is not the main focus of this paper, we mainly use it to evaluate the memory usage.
}

We note that both \gbbs{} and \hipc{} exclude the postprocessing to compute the actual BCCs,
but only report the number of BCCs at the end of the algorithm.
We include this step in \ouralgo{}, although this postprocessing only takes at most 2\% of the total running time in all our tests.

\begin{figure}[t]
  \centering
  \includegraphics[width=\columnwidth]{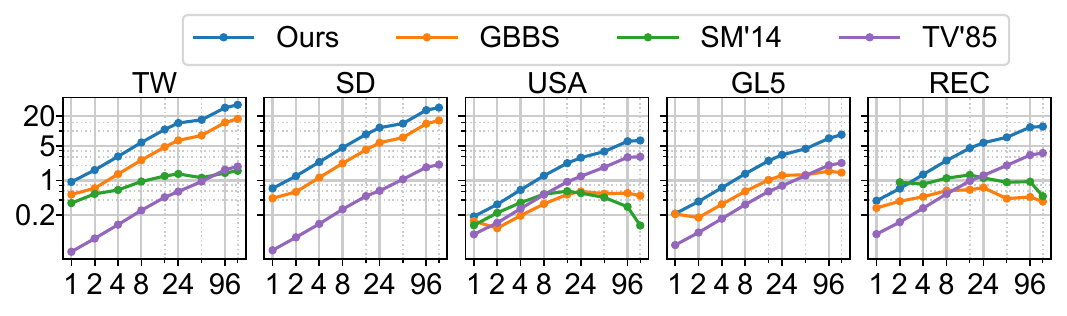}
  \caption{\small \textbf{Scalability curves for different BCC algorithms.} In each plot, $x$-axis is core counts (last data point is 96 core with hyperthreading) and $y$-axis is speedups normalized to SEQ (the sequential Hopcroft-Tarjan algorithm). Higher is better. SEQ is 1.}\label{fig:scalability}
  \Description[<short description>]{<long description>}
\end{figure} 

\subsection{Overall Performance}\label{sec:exp:bcc}
We present the running time of all algorithms in \cref{tab:bcc}.
Our \Ouralgo is \defn{faster than all baselines on all graphs}, mainly due to the theoretical efficiency---work- and space-efficiency enables competitive sequential times over the Hopcroft-Tarjan sequential algorithm, and polylogarithmic span ensures good speedup for all graphs.

\myparagraph{Sequential Running Time.}
We first compare the \emph{sequential} running time of
\seq{}, \gbbs{}, and \ouralgo. 
\seq{} and \ouralgo use $O(n+m)$ work. 
To enable parallelism,
both \ouralgo and \gbbs{} traverse all edges multiple times (running CC twice in Steps 1 and 4, and computing low/high for the \sketch{} in Step 3).
We describe more details about \gbbs{} implementation in \cref{sec:exp:breakdown}.
On average, our sequential time is 2.8$\times$ slower than \seq{}, but is 10\% faster than \gbbs{}.

\begin{figure*}[t]
  \centering
  \vspace{-.5em}
  \includegraphics[width=0.8\textwidth]{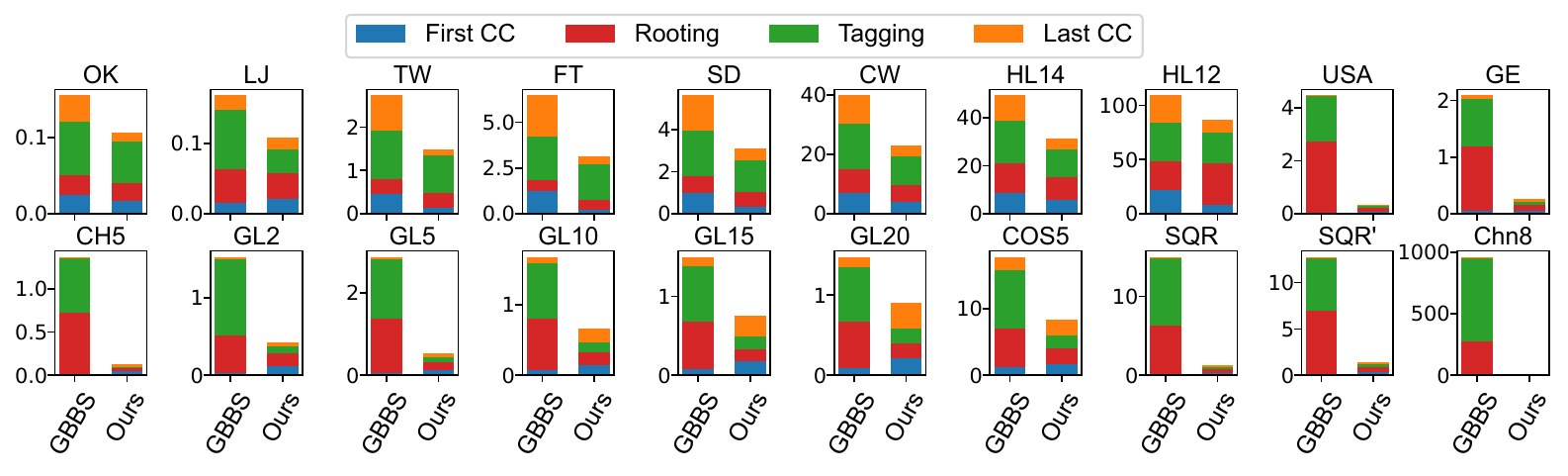}
  \vspace{-.5em}
  \caption{\small \textbf{BCC breakdown.} $y$-axis is the running time in seconds. The results for all the 27 graphs are in the full paper.}\label{fig:BCC_Total_Break}
  \vspace{-1.5em}
  \Description[<short description>]{<long description>}
\end{figure*}

\myparagraph{Scalability and Parallelism.} We report the scalability curves for \ouralgo{}, \gbbs{} and \hipc{} on five graphs (\cref{fig:scalability}).
For fair comparison, the speedup numbers in \cref{fig:scalability} are normalized to the running time of \seq{}.
On these graphs, \ouralgo is the only algorithm that scales to all processors.
It outperforms \gbbs{} and \hipc{} on all graphs with all numbers of threads (except REC on 2 cores).
We noticed that \hipc suffers from scalability issues, and the best performance can be achieved at around 16 threads.
Hence, we report \hipc's better running time of 16 and 192 threads in \cref{tab:bcc}.
\gbbs{} has similar issues on a few graphs.
However, as \gbbs{}'s performance does not drop significantly as core count increases, we consistently report \gbbs's time on 192 threads in \cref{tab:bcc}.

Our average self-relative speedups on both \lowdiam{} graphs and \largediam{} graphs are 36$\times$.
On large-scale low-diameter graphs with sufficient parallelism, the self-relative speedup can be up to 66$\times$.
Even on large-diameter graphs, \ouralgo{} achieves up to 47$\times$ self-relative speedup.
In comparison, the self-relative speedup of \gbbs's BFS-based algorithm is 29$\times$ on \lowdiam{} graphs and 3.7$\times$ on \largediam{} graphs.
This makes \gbbs{} only 11\% faster than \seq{} on \largediam{} graphs (and can be slower on some graphs), while ours is 5.1--18.5$\times$ better.
Overall, our parallel running time is 10$\times$ faster on \largediam{} graphs and 1.6$\times$ faster on \lowdiam{} graphs than \gbbs{}.
On some graphs, \hipc{} achieves better performance than \gbbs{}, but \ouralgo{} is 1.7--11.1$\times$ faster than \hipc{} on all the graphs.

To verify that \gbbs's performance is bottlenecked by BFS, we created \knn{} graphs GL2--20 from the set of points but with different values of $k$.
When increasing $k$ over 5, the graphs have more edges but smaller diameters.
For both \ouralgo and \seq, the running times increase when $k$ grows due to more edges (and thus more work), but the trend of \gbbs's running time is decreasing.
This indicates that the BFS is the dominating part of running time for \gbbs{}, and the performance on \gbbs{} is bottlenecked by the $O(\textsc{Diam}(G)\log n)$ span.


\subsection{Performance Breakdown}\label{sec:exp:breakdown}
To understand the performance gain of \ouralgo over prior parallel BFS-based BCC algorithms,
we compare our performance breakdown with \gbbs{} in \cref{fig:BCC_Total_Break}.
We choose \gbbs{} because it can process all graphs.
Since \gbbs{} is also in the \sketchconnect{} framework, we use the same four step names for \gbbs{} as in \ouralgo,
but there are a few differences.
(1) For \emph{First-CC}, \Ouralgo generates a spanning forest while \gbbs{} only finds all CCs.
(2) For \emph{\rootst{}}, \Ouralgo uses ETT to root the tree while \gbbs{} applies BFS on all CCs to find the spanning trees.
(3) The task for \emph{\lowhigh} is almost the same, but \gbbs{} computes fewer tags than \ouralgo{} since it is based on BFS trees.
\Ouralgo uses 1D RMQ queries that are theoretically-efficient, while \gbbs uses a bottom-up traversal on the BFS tree.
(4) For \emph{Last-CC}, both algorithms run CC algorithms on the \sketches{} to find BCCs.

We first discuss the two steps \firstcc{} and \lastcc{} that use connectivity.
\gbbs{} can be faster than \ouralgo{} in \firstcc{} on some graphs.
The reason is that our algorithm also constructs the spanning forest in \firstcc{}, while \gbbs{} has to run BFS in \rootst{} to generate the BFS spanning forest.
In \lastcc{}, the two algorithms achieve similar performance, and in many cases, \ouralgo{} is faster.
We note that the CC algorithm is independent with the BCC algorithm itself.\
Both the CC algorithm used in our implementation and \gbbs{} is based on algorithms in an existing paper~\cite{dhulipala2020connectit}.
As mentioned, based on the results in \cite{dhulipala2020connectit}, the ``best'' CC algorithm can be very different for different types of graphs.
One can also plug in any CC algorithms to \ouralgo{} or \gbbs{} BCC algorithm to achieve better performance for specific input graphs.

In the \rootst{} step (\emph{generate rooted spanning trees}, the red bar), \ouralgo is significantly faster than \gbbs{}.
\gbbs{} is based on a BFS tree, and after computing the CCs of input graph $G$, it has to run BFS on $G$ again, which results in $O(m+n)$ work and $O(\textsc{Diam}(G)\log n)$ span.
In comparison, \ouralgo obtains the spanning trees from the \firstcc{} step, and only uses ETT in the \rootst{} step with $O(n)$ expected work and $O(\log n)$ span \whp.
As shown in \cref{fig:BCC_Total_Break}, this step for \gbbs{} is the dominating cost for \largediam{} graphs, and this is likely the case for other parallel BCC algorithms using BFS-based skeletons.
\ouralgo almost entirely saves the cost in this step (13$\times$ faster on average on \largediam{} graphs).
For \lowdiam{} graphs, the two algorithms perform similarly---\ouralgo{} is about 1.1$\times$ faster in this step.

In the \lowhigh{} step (the green bars), both \ouralgo and \gbbs compute the tags such as $\low$ and $\high$.
Since \ouralgo uses an AST, the values of the arrays are computed using 1D range-minimum query (see \cref{sec:bcc-details}) with $O(\log n)$ span.
\gbbs computes them by a bottom-up traversal on the BFS tree, with $O(\textsc{Diam}(G)\log n)$ span.
Hence, on \largediam{} graphs, \gbbs{} also consumes much time on this step, and \ouralgo is 1.2--830$\times$ faster than \gbbs.
On \lowdiam{} graphs, \gbbs{} also gets sufficient parallelism, and the performance for both algorithms is similar.

In summary, on all graphs, 
\ouralgo{} is faster than \gbbs{} mainly due to the efficiency in the \rootst{} and \lowhigh{} step, and the reason is that
our algorithm has polylogarithmic span, while \gbbs{} relies on the BFS spanning tree and requires $O(\textsc{Diam}(G)\log n)$ span.

\subsection{The Tarjan-Vishkin Algorithm}
\label{sec:exp:tv}
Although engineering Tarjan-Vishkin (TV)~\cite{tarjan1985efficient} is not the focus of this paper, for completeness, we also implemented a faithful TV algorithm.
\ifconference{Due to space limit, we give more details and report the numbers in the full paper, and summarize our findings here.}\iffullversion{We report the relative space usage of \ouralgo{}, TV, and \gbbs{} in \cref{fig:space_compare}, normalized to the most space-efficient implementation.}
Due to space inefficiency, our TV implementation cannot run on the three largest graphs (CW, HL14, and HL12) on our machines with 1.5TB memory.
We note that the smallest among them (CW) only takes about 300GB to store the graph, and our algorithm uses 572GB memory to process it.
On all graphs, TV uses 1.2--10.8$\times$ more space than \ouralgo{}.
\gbbs{} is about 20\% more space-efficient than \ouralgo{}. The reason is that they need to compute fewer number
of tags than \ouralgo{}.

\iffullversion{Regarding running time, we take the average of the running times for each algorithm on each category of the graph instances, and normalize to \ouralgo{}.
\begin{table}[!h]
    \small
    \centering
      \begin{tabular}{c|cccc}
             & \textbf{Ours} & \textbf{GBBS} &\textbf{Our-TV} & \textbf{SEQ} \\
      \hline
    Social & 1     & 1.67  & 10.1 & 21.2 \\
    Web   & 1     & 1.69  & 7.61  & 16.3 \\
    Road  & 1     & 9.89  & 1.94  & 7.18 \\
    $k$-NN  & 1     & 3.58  & 4.13  & 8.68 \\
    Synthetic & 1     & 14.9 & 3.96  & 17.0 \\
    \end{tabular}
    \label{tab:tv-overview}%
    \vspace{-.5em}
  \end{table}%
}\ifconference{Regarding running time, we report the running time of TV on all graphs in the full paper, and summarize the results here.} Due to the cost of explicitly constructing the \sketch{}, TV performs slowly on small-diameter graphs, and is slower than \gbbs even on $k$-NN graphs.
On all these graphs, the speedup for TV on 96 cores over \seq{} is only 1.4--3$\times$.
This is consistent with the findings in prior papers~\cite{cong2005experimental,slota2014simple}.
TV works well on road and synthetic graphs due to small edge-to-vertex ratio, so the $O(m)$ work and space for generating the skeleton does not dominate the running time.
In this case, polylogarithmic span allows TV to perform consistently better than \gbbs.
TV is faster than \seq{} on 96 cores on all graphs, but slower than \ouralgo.

\section{Conclusion}
In this paper, we propose the \ouralgo{} (Fencing on Arbitrary Spanning Tree) algorithm for parallel biconnectivity.
\ouralgo has $O(m+n)$ expected optimal work, polylogarithmic span (high parallelism), and uses $O(n)$ auxiliary space (space-efficient).
The theoretical efficiency also enables high performance.
On our machine with 96 cores and a variety of graph types,
\ouralgo{} outperforms all existing BCC implementations on all tested graphs.

\section*{Acknowledgement}
This work is supported by NSF grants CCF-2103483 and IIS-2227669, and UCR Regents Faculty Fellowships.
We thank anonymous reviewers for the useful feedbacks. 
\balance


\balance


\appendix

\iffullversion{
\section{More details and discussion for the Tarjan-Vishkin Algorithm}
\label{app:tv}
To parallelize BCC, the Tarjan-Vishkin algorithm~\cite{tarjan1985efficient} uses an arbitrary spanning tree (AST) $T$ instead of a DFS tree.
The spanning tree can be obtained by any parallel CC algorithm.
The first is the famous Euler tour technique (ETT) that efficiently roots a tree (see \cref{sec:prelim}).
Our algorithm also uses ETT.
The second technique is to build the skeleton $G'$ based on an AST.
Unfortunately, $G'$ in Tarjan-Vishkin is very large, making the algorithm less practical.

Given the input graph $G=(V,E)$ and an AST $T$, $G'=(E,E')$ where $E'$ consists of $(e_1,e_2)$ ($e_1,e_2\in E$ are edges in $G$) iff. one of the following conditions hold:
\begin{itemize}[leftmargin=*,topsep=0pt, partopsep=0pt,itemsep=0pt,parsep=0pt]
    \item $e_1=(u, p(u))$, $e_2=(u, v)$ in $G\setminus T$, and $u,v\in V, \first[v] < \first[u]$.
    \item $e_1=(u, p(u))$, $e_2=(v, p(v))$, and $(u, v)$ is a cross edge in $G\setminus T$.
    \item $e_1=(u, v)$, where $v=p(u)$ is not the root in $T$,
    and $e_2=(v, p(v))$, and there exists a non-tree edge $(x, y)$ such that $x\in T_u$ and $y\notin T_v$.
\end{itemize}

The above relationships can be determined by using the four axillary arrays $\first[\cdot]$, $\last[\cdot]$, $\low[\cdot]$, and $\high[\cdot]$ as mentioned in \cref{sec:tarjan-vishkin}.

In fact, we can prove that \ouralgo is equivalent to Tarjan-Vishkin.
However, the analysis for Tarjan-Vishkin is also quite involved (we refer to J{\'a}J{\'a}'s textbook for a good reference~\cite{JaJa92}).
Hence, we give a standalone analysis for \ouralgo in \cref{sec:correctness}, since we feel that understanding the analysis of Tarjan-Vishkin (correctness and cost bounds) and the analysis in \cref{sec:correctness} is in a similar level of difficulty.

In addition, we believe that the five algorithms we described and tested experimentally (Hopcroft-Tarjan, Tarjan-Vishkin, \ouralgo, \gbbs, \hipc{}) are similar, \emph{once we put them in the skeleton-connectivity framework and explicitly specify what the skeleton graph $G'$ is in each algorithm}.
Thus, the skeleton-connectivity framework brings in a different angle to understand parallel BCC algorithms, and eventually helps us come up with \ouralgo that is simple and efficient.

\section{Additional Proofs}\label{sec:add-proofs}

The proofs here are not very complicated, and should have been shown previously.
We provide them here mainly for completeness since the proofs of other lemmas in \cref{sec:correctness} use them.

\subsection{Proof of \cref{lem:common}}

\begin{aproof}
  Assume to the contrary that two BCCs $C_1$ and $C_2$ share at least two common vertices. After removing an arbitrary vertex from $C_1 \cup C_2$, there is at least one common vertex remaining. WLOG, we assume
  $u$ is a remaining common vertex. Because $C_1$ and $C_2$ are BCCs and $u$ is in both of them, all the remaining vertices in $C_1 \cup C_2$ are connected to $u$,
  so they remain in the same CC as $u$.
  Therefore, $C_1\cup C_2$ is a BCC, which contradicts with that
  $C_1$ and $C_2$ are two BCCs.
\end{aproof}

\subsection{Proof of \cref{lem:cycle}}

\begin{aproof}
  We first rewrite the cycle starts and ends with $v_i$ as $v_0$--$v_1$--$v_2$--\ldots--$v_j$--\ldots--$v_k$ ($v_0=v_k=v_i$).
  All the other vertices on the cycle appear exactly once.
  Then, there exists at least two disjoint paths that connect $v_i$ and $v_j$: one is $v_0$--\ldots--$v_j$, the other one is $v_j$--\ldots--$v_k$. Removing any vertex other than $v_i$ and $v_j$  disconnects at most one of the two paths, while the other path still connects $v_i$ and $v_j$.
  Thus, after removing any vertex, all the remaining vertices on the cycle are still connected, so all vertices on the cycle are in the same BCC.
\end{aproof}

\subsection{Proof of \cref{lem:bcc-connected}}

\begin{aproof}

  Assume to the contrary that $C$ has at least two CCs in $T$. For each CC, we find the shallowest node. Let $u$ and $v$ be the shallowest two of these two CCs. WLOG assume $u$ is no deeper than $v$.
  There are two possible positions for $u$ and $v$ in the spanning tree $T$. We first show that in both cases, $v$'s parent $w$ is biconnected with $u$.

  Case 1: neither $u$ nor $v$ is the ancestor of the other.
  Based on our assumption, there exists at least one path $P$ from $u$ to $v$ using vertices in $C$.
  Note that no vertices on the tree path from $u$ to $v$ are included in $P$.
  We now show that there are two disjoint paths that connect $w$ and $u$: 1) the tree path between $w$ and $u$, which does not contain intermediate nodes from $C$, and 2) the path from $w$ to $v$ then to $u$, only using intermediate nodes from $C$.
  Hence, $w$ and $u$ are biconnected. 

  Case 2: $u$ is $v$'s ancestor. We can similarly show that there are at least two paths from $w$ to the nearest vertex in $u$'s component, one using the tree path while the other using vertices in $C$.
  Hence, $w$ and $u$ are biconnected.

  In both cases, we can show that $w$ and $u$ are biconnected. For any $x\in C$, removing any other vertex $y\in C$, $x$ and $w$ are still connected through either $u$ or $v$. Hence, $w$ and $C$ are biconnected, contradicting that $v$ is the shallowest tree node in the BCC ($w$ is $v$'s parent).
\end{aproof}

\subsection{Proof of \cref{lem:art-head}}

\begin{aproof}
  We first prove that each non-root \bcchead{} is an articulation point.
  A \bcchead{} $h$ is the shallowest node for a BCC $C$.
  Let the $c$ be one of $h$'s children in $C$.
  Assume to the contrary that $h$ is not an articulation point, then $G$ is still connected after removing $h$, including $p(h)$ and $c$.
  Removing any other vertex other than $c$, $h$, and $p(h)$ does not disconnect $c$ and $h$ based on the definition of the BCC, so $c$ and $p(h)$ is also connected using an additional tree edge $h$--$p(h)$.
  Combining both cases, $p(h)$ is biconnected with $c$, contradicting that $h$ is a BCC head.

  We now show an articulation point $a$ must be a BCC head.
  Assume to the contrary that $a$ is not a BCC head, then based on \cref{lem:bcc-connected}, all $a$'s children must be in the same BCC as $p(a)$.
  Since $a$ is an articulation point, removing it disconnects $G$.
  However, all $a$'s children's subtrees are still connected, so as $V\setminus T_a$ using tree edges.
  Hence, at least one of $a$'s children, referred to as $c$, is disconnected from $p(a)$, since otherwise $a$ is not an articulation point.
  In this case, $a$ is the BCC head of the BCC which $a$ and $c$ is in, contradicting the assumption.
\end{aproof}

\subsection{Proof of \cref{lem:function}}
\begin{aproof}
To prove that the skeleton is generated correctly, we show that the functions \textsc{InSkeleton}, \textsc{Fence}, and \textsc{Back} work as expected, and \textsc{InSkeleton} returns \emph{true} iff. edge $u$--$v$ is a plain edge or a cross edge.
We first prove that a vertex $u$ is $v$'s ancestor if and only if $\first[u]\le \first[v]$ and $\last[u]\ge \last[v]$, which is exactly \cref{line:bcc-backward}.

On an Euler tour of a spanning tree, each edge appears exactly twice (one time in each direction) in a DFS order. $\first[\cdot]$/$\last[\cdot]$ stores the time stamps each the vertex $\first$/$\last$ appears on the Euler tour.
We first show that $\forall v \in T_u$, $\first[u]\leq \first[v]$ and $\last[v] \leq \last[u]$.
This is because all the tree edges in $T_u$ are traversed after $u$ have been traversed, so $\forall v \in T_u, \first[v] \geq \first[u]$; and $u$ last appears when all the tree edges in $T_u$ have been traversed, so $\forall v \in T_u, \last[v] \leq \last[u]$.


We show that if $u$ is an ancestor of $v$, then the function \cref{line:bcc-backward} in \cref{alg:bcc} returns true.
If $u$ is an ancestor of $v$, then $v \in T_u$, so that $\first[u] \leq \first[v]$ and $\last[u] \geq \last[v] \geq \first[v]$.
Then we will show if $u$ is not an ancestor of $v$, then the function \cref{line:bcc-backward} in \cref{alg:bcc} returns false (at least one of the two conditions is false).
If $u$ is not an ancestor of $v$, there are two cases for $u$: $u \in T_v$ or $u\notin T_v$.
If $u\in T_v$, then $\first[v] \leq \first[u]$, so the first condition in function \cref{line:bcc-backward} is false. If $u \notin T_v$, either $\last[u] < \first[v]$ and $\last[v] < \first[u]$. If $\last[u]<\first[v]$, the second condition in function \cref{line:bcc-backward} is false. If $\last[v] < \first[u]$, because $\first[v]<\last[v]<\first[u]$, the first condition is false.
Therefore, $u$ is an ancestor of $v$ iff. the function \cref{line:bcc-backward} in \cref{alg:bcc} returns true.
This function is called on edge $u$--$v$  by the function \textsc{InSkeleton} only when it is a tree edge. Therefore, $u$ is a non-parent ancestor of $v$.
Therefore, \textsc{InSkeleton} returns \emph{true} on \cref{line:bcc-back} iff. $u$--$v$ is a cross edge.

We then prove that the function \cref{line:bcc-critical} in \cref{alg:bcc} can correctly determine whether a tree edge $u$--$v$ is a fence edge.

We first show that if tree edge $u$--$v$ is a fence edge, \cref{line:bcc-critical} in \cref{alg:bcc} returns \emph{true}.
We just showed that $\forall x \in T_u, \first[u] \leq \first[x]$ and $\last[u] \geq \last[x] \geq \first[x]$. If tree edge $u$--$v$  is a fence edge, then
for all the edges with one endpoint in $T_v$, the other endpoint must be in $T_u$.
Recall that $\low[v]$ is the earliest (with the smallest $\first$ value) vertex connected to $v$'s subtree.
This means that this earliest vertex is also in $T_u$, and therefore $\low[v]\ge \first[u]$.
Similarly, $\high[v]$, which is the latest (with the largest $\first$ value) vertex connected to $v$'s subtree, should also be in $T_u$.
Therefore $\last[u]\ge \high[v]$.

Then we show that if tree edge $u$--$v$ is not a fence edge, \cref{line:bcc-critical} in \cref{alg:bcc} returns false. If tree edge $u$--$v$ is not a fence edge, then there exists an edge $x'$--$y'$, where $x'\in T_v$ and $y' \notin T_u$.  Because $y' \notin T_u$, either $\first[y'] < \first[u]$ or $\first[y'] > \last[y']$.
If $\first[y'] < \first[u]$, then

\begin{equation*}
\low[v] \leq w_1[x'] \leq \first[y']<\first[u]
\vspace{.2in}
\end{equation*}

Then the first condition in \cref{line:bcc-critical} in \cref{alg:bcc} is false.
If $\first[y'] > \last[u]$, then

\begin{equation*}
  \high[v] \geq w_2[x'] \geq \first[y']>\last[u]
  \vspace{.2in}
\end{equation*}

Then the second condition in \cref{line:bcc-critical} in \cref{alg:bcc} is false.
Therefore, the tree edge $u$--$v$ is a fence edge iff. function \cref{line:bcc-critical} in \cref{alg:bcc} returns \emph{true}.

In summary, \textsc{InSkeleton} returns \emph{true} iff. edge $u$--$v$ is a plain edge or a cross edge. Therefore, the skeleton $G'$ can be determined correctly by \textsc{InSkeleton}.
\end{aproof}

\subsection{Proof of \cref{thm:connect}}
\label{app:cc}
\vspace{.5em}
\begin{aproof}
Before we show the analysis, we will first review the two key techniques that name this algorithm in \connectit{}: low-diameter decomposition (LDD)~\cite{miller2013parallel} and a union-find structure by Jayanti et al.~\cite{jayanti2019randomized}.
A $(\beta, d)$-decomposition of a graph $G=(V,E)$ is a partition of $V$ into subsets $V_1, V_2, \cdots, V_k$ such that
(1) the diameter of each $V_i$ is at most~$d$, and
(2) the number of edges $(u, v) \in E$ with endpoints in different subsets, i.e., such that $u \in V_i, v \in V_j$ and $i \neq j$, is at most $\beta m$.
A parallel $(\beta, O({(\log n)}/{\beta}))$ decomposition algorithm is provided by Miller et. al.~\cite{miller2013parallel}, using $O(n+m)$ work and $O((\log^2 n)/\beta)$ span \whp.
The high-level idea of LDD is to start with a single source and search out using BFS.
Then in later rounds, we exponentially add new sources to the frontier and continue BFS processes.
By controlling the speed to add new sources, the entire BFS will finish in $O((\log n)/\beta)$ rounds, leaving at most $\beta m$ edges with endpoints from different sources.

Once the LDD is computed, the algorithm will examine all cross edges (endpoints from different sources) using a union-find structure by Jayanti et al.~\cite{jayanti2019randomized} to merge different components.
The algorithm either performs finds naively without using any path compression or uses a strategy called Find-Two-Try-Split.
Such strategies guarantee provably-efficient bounds.
The original bound is $O(l\cdot(\alpha(n,l/(np)) + \log(np/l + 1)))$ expected work and $O(\log n)$ PRAM time for a problem instance with $l$ operations on $n$ elements on a PRAM with $p$ processors.
When translating this bound to the binary fork-join model, all $l$ operations can be in parallel in the worst case, which leads to the work bound as $O(l\log n)$.

We note that if we set $\beta=1/\log n$, the LDD takes $O(n+m)$ work and $O((\log^3 n)/\beta)$ span \whp, and the union-find part takes $O(\beta m \log n)=O(m)$ work and $O(\log^2 n)$ span.
Combining the two pieces together gives $O(n+m)$ work and $O((\log^3 n)/\beta)$ span \whp for the ``LDD-UF-JTB'' algorithm.
\end{aproof}

\begin{figure*}[t]
  \centering
  \includegraphics[width=0.8\textwidth]{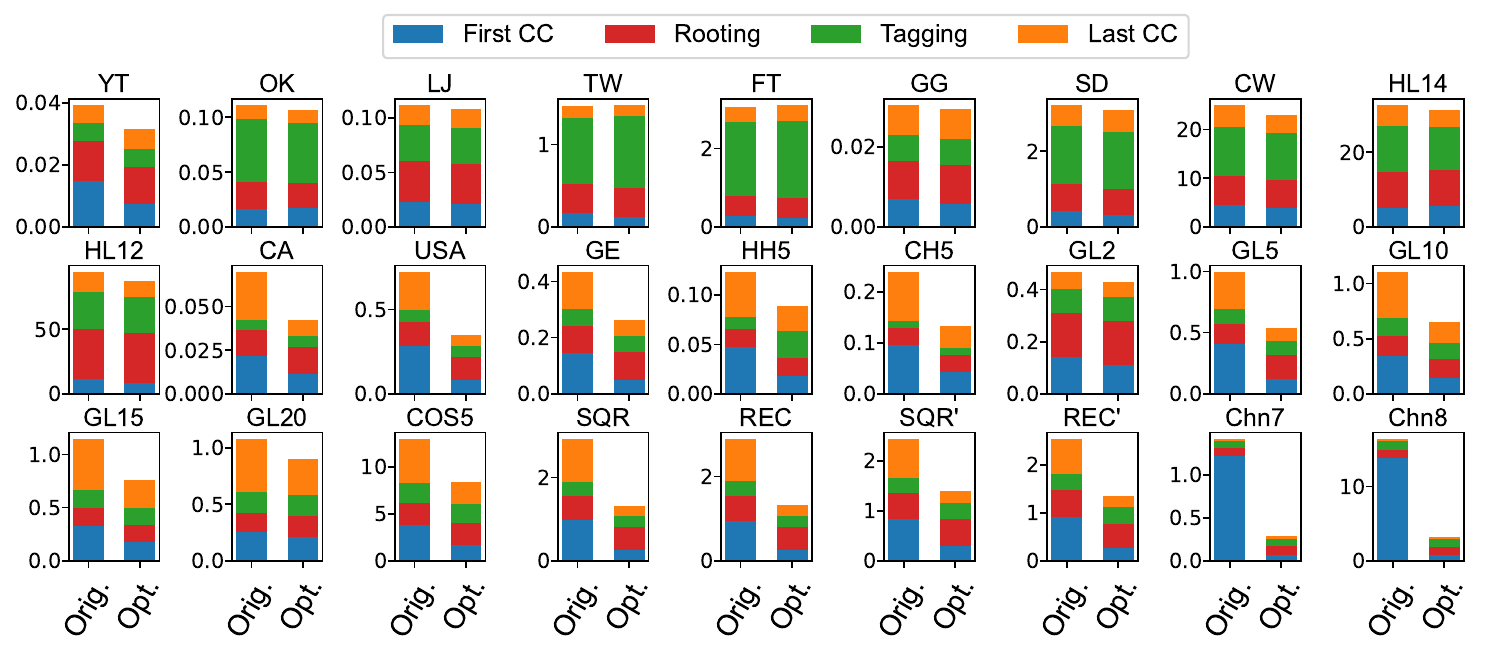}
  \caption{\small \textbf{Optimized BCC breakdown.} $y$-axis is the running time in seconds. "Orig."$=$ our original BCC implementation, "Opt."$=$ our implementation optimized with hash bags and local search proposed in~\cite{scc2022}.}\label{fig:BCC_Break_loc}
\end{figure*} 

%



\section{Performance Analysis of the Local Search Optimality}
In \Ouralgo, CC is an important primitive used in \firstcc{}  and \lastcc{}.
We optimized the CC implementation using hash bags and local searches proposed by a recent paper~\cite{scc2022}.
These optimizations function as a parallel granularity control.
When the frontier (vertices being processed in one round) is small,
the algorithm explores multi-hop neighbors of the frontier instead of one-hop neighbors to saturate all threads with sufficient work.
It helps reduce the number of total rounds in a connectivity search,
thus reducing the synchronization costs between rounds.
It works favorably well on \largediam{} graphs.
We measure the improvement from the optimizations in \cref{fig:BCC_Break_loc}, where \emph{Orig.} is the version without the optimizations and \emph{Opt.} is the version with the optimizations.
As shown in \cref{fig:BCC_Break_loc},
on \lowdiam{} graphs, \emph{Orig.} and \emph{Opt.} have similar performance.
On \largediam{}, \emph{Opt.} can be 1.1--4.5$\times$ faster than \emph{Orig.} and is 1.8$\times$ faster on average. 

\section{Performance of the Tarjan-Vishkin Algorithm and Space Usage}

For completeness, we also implemented the faithful Tarjan-Vishkin algorithm~\cite{tarjan1985efficient} discussed in \cref{app:tv}.
We acknowledge that some existing papers~\cite{cong2005experimental,edwards2012better} discussed some possible optimizations for Tarjan-Vishkin.
However, engineering the Tarjan-Vishkin algorithm is not the focus of this paper, and we mainly use it to measure the space usage and get a sense on how Tarjan-Vishkin compares to other existing BCC algorithms.
Our conclusions are consistent with the results drawn from the previous papers~\cite{cong2005experimental,slota2014simple}.

\begin{table}[htbp]
    \small
    \centering
      \begin{tabular}{cc|cccc}
            &       & \multirow{2}[1]{*}{\textbf{Ours}} & \multirow{2}[1]{*}{\textbf{GBBS}} & \multirow{2}[1]{*}{\textbf{TV}} & \multirow{2}[1]{*}{\textbf{SEQ}} \\
            &       &       &       &       &  \\
      \midrule
      \multirow{5}[2]{*}{\begin{sideways}\textbf{Social}\end{sideways}} & \textbf{YT} & \textbf{0.030} & 0.040 & 0.076 & 0.175 \\
            & \textbf{OK} & 0.103 & 0.158 & 2.10  & 3.14 \\
            & \textbf{LJ} & 0.104 & 0.159 & 0.859 & 1.87 \\
            & \textbf{TW} & 1.44  & 2.83  & 25.7  & 49.2 \\
            & \textbf{FT} & 3.10  & 6.44  & 41.6  & 122 \\
      \midrule
      \multirow{5}[2]{*}{\begin{sideways}\textbf{Web}\end{sideways}} & \textbf{GG} & 0.029 & 0.045 & 0.119 & 0.255 \\
            & \textbf{SD} & 3.11  & 5.61  & 43.0  & 92.3 \\
            & \textbf{CW} & 22.9  & 39.7  & N/A   & 695 \\
            & \textbf{HL14} & 31.1  & 50.7  & N/A   & 1011 \\
            & \textbf{HL12} & 89.1  & 105   & N/A   & 3027 \\
      \midrule
      \multirow{3}[2]{*}{\begin{sideways}\textbf{Road}\end{sideways}} & \textbf{CA} & 0.040 & 0.372 & 0.079 & 0.206 \\
            & \textbf{USA} & 0.336 & 4.64  & 0.673 & 2.25 \\
            & \textbf{GE} & 0.267 & 2.02  & 0.492 & 2.88 \\
      \midrule
      \multirow{8}[2]{*}{\begin{sideways}\boldmath{}\textbf{$\boldsymbol{k}$-NN}\unboldmath{}\end{sideways}} & \textbf{HH5} & 0.073 & 0.447 & 0.169 & 0.509 \\
            & \textbf{CH5} & 0.128 & 1.44  & 0.380 & 0.528 \\
            & \textbf{GL2} & 0.402 & 1.53  & 0.771 & 2.51 \\
            & \textbf{GL5} & 0.472 & 2.80  & 1.73  & 4.03 \\
            & \textbf{GL10} & 0.668 & 1.64  & 3.68  & 7.07 \\
            & \textbf{GL15} & 0.751 & 1.51  & 5.06  & 8.92 \\
            & \textbf{GL20} & 0.861 & 1.48  & 6.91  & 10.2 \\
            & \textbf{COS5} & 8.46  & 17.5  & 50.1  & 120 \\
      \midrule
      \multirow{6}[2]{*}{\begin{sideways}\textbf{Synthetic}\end{sideways}} & \textbf{SQR} & 1.32  & 15.4  & 6.79  & 24.4 \\
            & \textbf{REC} & 1.35  & 47.0  & 6.99  & 16.8 \\
            & \textbf{SQR'} & 1.31  & 12.5  & 2.76  & 10.6 \\
            & \textbf{REC'} & 1.37  & 22.4  & 2.83  & 10.7 \\
            & \textbf{Chn7} & 0.278 & 81.6  & 0.343 & 3.33 \\
            & \textbf{Chn8} & 3.25  & 957   & 3.97  & 38.9 \\
      \bottomrule
      \end{tabular}%

    \caption{\small\textbf{Tarjan-Vishkin running time (in seconds).}
    ``N/A'' $=$ not applicable because of out of memory.
    }
    \label{tab:tv-full}%
  \end{table}%
 
\begin{figure*}[t]
    \centering
    \includegraphics[width=\textwidth]{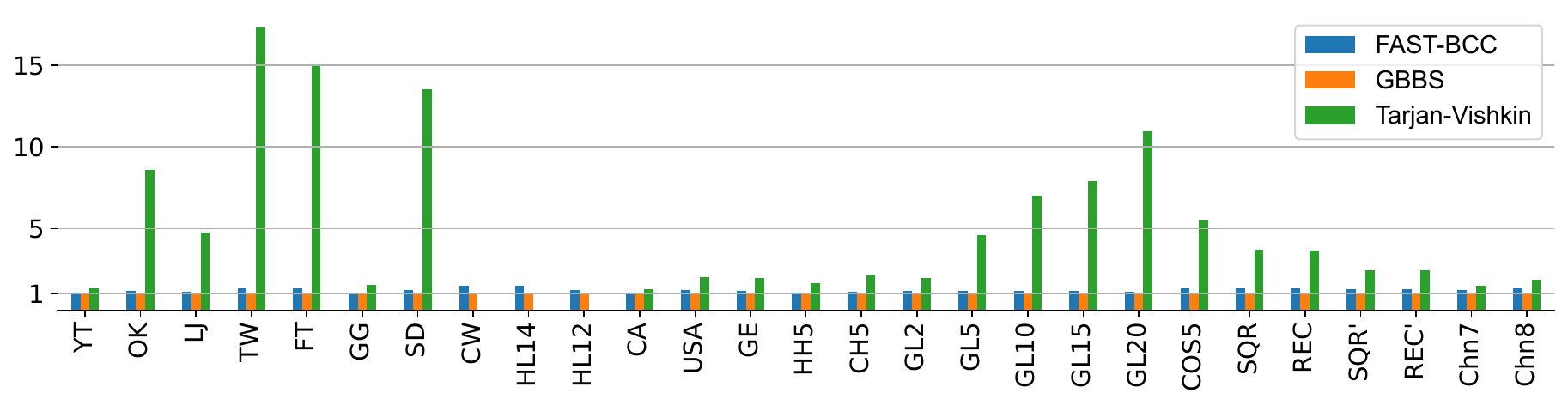}
    \caption{\small \textbf{Space Usage Comparision.} $y$-axis is the space usage normalized to GBBS. Lower is better.}\label{fig:space_compare}
\end{figure*} 

The running time and space usage are given in \cref{tab:tv-full} and \cref{fig:space_compare}.
For our implementations, Tarjan-Vishkin and use up to 11$\times$ extra space than \ouralgo on FT and SD (including the space to store the input graph).
The space overhead is decided by edge-to-vertex ratio, and for graphs with smaller ratios (e.g., chain graphs), the overhead is small.
Our TV implementation cannot run on the three largest graphs (CW, HL14, and HL12) on our machines with 1.5TB memory.
We note that the smallest among them (CW) only takes about 300GB to store the graph, and \ouralgo uses 572GB memory to process it.
Since all three graphs have relatively large edge-to-vertex ratios, our TV implementations are unlikely to execute on these graphs for shared-memory machines in foreseeable future.
\gbbs{} is slightly more space-efficient than \ouralgo{}, and takes about 20\% less space than us. The reason is that they need to compute fewer number
of tags than \ouralgo{}.

On all graphs, TV is faster than SEQ on 96 cores, but slower than \ouralgo.
The overhead of TV is due to the cost to explicitly construct the skeleton.
On social, web, and $k$-NN graphs, the speedup for TV on 96 cores over \seq{} is only 1.4--3$\times$.
TV works well on road and synthetic graphs due to small edge-to-vertex ratio, so the $O(m)$ work and space for generating the skeleton does not dominate the running time.
In this case, polylogarithmic span allows TV to perform consistently better than \gbbs.

}

\end{document}
\endinput